\newcommand{\Ran}{{\operatorname{Ran}}}
\newcommand{\cs}{{\rm the Cauchy-Schwarz inequality }}
\newcommand{\Opw}{{\operatorname{Op\!^w\!}}}
\newcommand{\N}{{\mathbb{N}}} 
\newcommand{\R}{{\mathbb{R}}} 
\newcommand{\C}{{\mathbb{C}}}
\newcommand{\e}{{\rm e}}
  \renewcommand{\i}{{\rm i}}
\renewcommand{\Re}{{\rm Re}\,} \renewcommand{\Im}{{\rm Im}\,}
\DeclarePairedDelimiter\inp\langle\rangle
\newcommand\parb[2][]{#1 \big ( #2#1\big )}
\newcommand{\mand}{\text{ and }}
\newcommand{\vE}{{\mathcal E}}
\theoremstyle{plain}
\newtheorem{thm}{Theorem}[section]
\newtheorem{proposition}[thm]{Proposition}
 \newtheorem{corollary}[thm]{Corollary}
\theoremstyle{definition}
 \newtheorem{remark}[thm]{Remark}
\newtheorem{remarks}[thm]{Remarks}
\newtheorem*{remarks*}{Remarks}
\newtheorem*{remark*}{Remark}
\newtheorem*{defn*}{Definition}
\numberwithin{equation}{section}
\title{Decay of eigenfunctions of elliptic PDE's, II}
\author{I. Herbst}
\address[I.  Herbst]{Department of Mathematics \\
  University of Virginia \\
  Charlottesville \\
  VA 22904\\ U.S.A.}
\email{iwh@virginia.edu}
  \author{E. Skibsted} \address[E. Skibsted]{Institut for Matematiske
  Fag \\
  Aarhus Universitet\\ Ny Munkegade 8000 Aarhus C, Denmark}
\email{skibsted@math.au.dk}
\begin{document}

  \begin{abstract} We study exponential  decay rates of eigenfunctions of self-adjoint
    higher order elliptic operators on  $\R^d$.  We are interested in decay rates as a function of direction.  We show that the
    possible  decay rates are to a large extent determined
    algebraically. 
\end{abstract}

\keywords {eigenfunctions, exponential decay, microlocal analysis,
  combinatorics.}

\maketitle

\tableofcontents

\section{Introduction and previous results}\label{subsec:result}
Consider  a real elliptic polynomial $Q$ of degree $q$ on  $\R^d$.  ($Q$ elliptic means that for large $\xi \in \R^d, C|Q(\xi)| > |\xi|^q$ for some $C$.)
We consider  the operator
$H=Q(p)+V(x)$, $p=-\i \nabla$, on $L^2(\R^d)$ with  $V$
bounded
and  measurable. For most of our results we assume
 $\lim_{|x| \to \infty} V(x) = 0$ and additional decay properties of
 the potential.  By the assumptions on $Q$
the operator $Q(p)$ is self-adjoint with domain the standard Sobolev space of
order $q$ which consequently is also the domain of $H$. The goal of the
 paper is
to study exponential decay of $L^2$-eigenfunctions of $H$ with eigenvalue $\lambda \in \mathbb{R}$ as a function of direction. It is the
second in a series of two papers on exponential decay.   The first one is \cite{HS}.

In \cite{Ag1}, Agmon investigated the asymptotic behavior of the
Green's function (the integral kernel of the inverse of $Q(p) -
\lambda$ for spectral parameter $\lambda$ in the resolvent set of
$Q(p)$).  In certain cases he obtained rather precise asymptotics of
this function.  Since we are investigating the asymptotic behavior of
eigenfunctions of $Q(p)+V(x)$ with $V(x)$ small at infinity, one might
suspect that the asymptotic behavior of the Green's function would
determine the exponential rate of fall-off of the eigenfunction.  This
is false in a rather spectacular way: First, the eigenvalue $\lambda$
may actually be in the spectrum of $Q(p)$ where the Green's function
decays (at most) like an inverse power of $|x|$ while the
eigenfunction decays exponentially.  And second, whether or not the
eigenvalue is in the spectrum of $Q(p)$, there may be 
several (global or local)  decay rates which
occur for different potentials $V$ of compact support.  Of course at
least one of these decay rates will not reflect the asymptotic
behavior of the Green's function.  Already in \cite{HS} we gave
examples of these phenomena.  For another example see Section
\ref{sec:example}.  These phenomena do not occur if
  $Q(\xi) = |\xi|^2$, at least if for example $V = o(|x|^{-1/2})$ at
  infinity (see Theorems \ref{thm:global decay rate determined} and
  \ref{rotationinv}).

We first summarize some of the results of \cite{HS} which will be our starting  point.  References to previous work are given there.  We define
the {\it global decay rate}  of $\phi \in L^2(\mathbb{R}^d)$ as
  \begin{equation}\label{eqn:sigmac}
    \sigma_g=\sup\{\sigma\geq 0|\e^{\sigma|x|}\phi\in L^2\}.
  \end{equation}\ 

It is intuitively clear that $\sigma_g$ is determined by  the directions of weakest exponential decay of $\phi$. \\ 

  In the rest of this section we assume that $(H-\lambda)\phi = 0$ with $\lambda \in \mathbb{R}$ and $\phi \in L^2(\mathbb{R}^d)$.  We will mostly assume there is a splitting of $V,$  $V=V_1+V_2$, into bounded functions, with $V_1$
smooth and real-valued and $V_2$  measurable, with additional assumptions depending on
the result.

\begin{thm}\label{thm:start}
Under either of the following two conditions we can conclude that $\sigma_g >0$:
\begin{enumerate}[1)]
\item\label{item:1}
$\lambda \notin {\Ran }Q: = \{Q(\xi)|\xi \in \mathbb{R}^d\}$  and $V(x) = o(1)$ at infinity.
\item\label{item:2}
$\lambda  \in {\Ran }Q$ but $\lambda$ is not a critical value of
$Q(\xi)$, $\xi$ real, and in addition
\begin{align*}
&\forall \alpha: \partial^{\alpha}V_1(x) = o(|x|^{- |\alpha|}),\\
&V_2(x) = o(|x|^{-1}).
\end{align*}
\end{enumerate}
\end{thm}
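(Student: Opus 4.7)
\textbf{Proof plan for Theorem~\ref{thm:start}.}
In both cases the plan is to approximate the would-be weight $\e^{\sigma|x|}$ by a family of \emph{bounded} weights $\e^{F_n}$, with $F_n\nearrow\sigma|x|$ pointwise and $|\nabla F_n|\le\sigma$ uniformly in $n$, and then to derive an $L^2$-bound on $\psi_n:=\e^{F_n}\phi$ that is uniform in $n$; monotone convergence then gives $\e^{\sigma|x|}\phi\in L^2$ and so $\sigma_g\ge\sigma>0$. Because $Q$ is a polynomial of degree $q$, the conjugated eigenvalue equation reads
\begin{equation*}
\bigl(Q(p-\i\nabla F_n)+V-\lambda\bigr)\psi_n=0,
\end{equation*}
and everything hinges on extracting a coercive/commutator estimate for the left-hand side.

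For (\ref{item:1}) I would implement a Combes--Thomas type analytic-deformation argument. Ellipticity of $Q$ together with $\lambda\notin\Ran Q$ yields a uniform lower bound $|Q(\xi)-\lambda|\ge c(1+|\xi|)^q$ with fixed sign. Since $Q$ is a real polynomial, the deformed symbol $Q(\xi-\i\eta)-\lambda$ differs from $Q(\xi)-\lambda$ by $O(|\eta|(1+|\xi|)^{q-1})$, so its real part retains the lower bound $\tfrac{c}{2}(1+|\xi|)^q$ for all $|\eta|\le\sigma_0$, once $\sigma_0$ is small. A sharp-G\aa rding argument applied $x$-locally (allowing $\eta=\nabla F_n(x)$ to depend on $x$) turns this into a coercive bound on $Q(p-\i\nabla F_n)-\lambda$ uniform in $n$. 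Splitting $V=V_c+V_r$ with $V_c$ compactly supported and $\|V_r\|_\infty$ arbitrarily small (possible because $V\to0$), absorbing $V_r$ in the coercivity estimate and rewriting the equation as
\begin{equation*}
\psi_n=-\bigl(Q(p-\i\nabla F_n)+V_r-\lambda\bigr)^{-1}\bigl(\e^{F_n}V_c\phi\bigr),
\end{equation*}
one obtains a uniform bound on $\|\psi_n\|$ since the right-hand side is bounded by a constant times $\|V_c\phi\|$ (the weight on the compactly supported function is bounded).

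For (\ref{item:2}) the symbol $Q(\xi)-\lambda$ vanishes on the smooth hypersurface $\Sigma_\lambda=\{Q=\lambda\}$, and straight coercivity fails. I would use a microlocal positive-commutator argument in the spirit of Froese--Herbst. Non-criticality gives $\nabla Q\ne 0$ on $\Sigma_\lambda$, so one can select an escape symbol $a(x,\xi)$ (for instance, smoothly cut-off versions of $x\cdot\nabla Q(\xi)/|\nabla Q(\xi)|^2$) with $\{Q,a\}\ge c>0$ in a conic neighbourhood of $\Sigma_\lambda$ for large $|x|$. Microlocalising $\phi$ with respect to a frequency cut-off $\chi(p)$ supported near $\Sigma_\lambda$ versus its complement: on the complement $Q(p)-\lambda$ is elliptic, and the deformation argument of case (\ref{item:1}) applies; near $\Sigma_\lambda$ one exploits that the imaginary part of the deformed symbol, $\Im\bigl(Q(\xi-\i\sigma\nabla F_n)\bigr)\approx-\sigma\nabla Q(\xi)\cdot\nabla F_n$, yields, when paired with $\psi_n$ and combined with $a$, a positive commutator $\ge c\sigma\|\psi_n\|^2$ up to errors. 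The assumptions $\partial^\alpha V_1=o(|x|^{-|\alpha|})$ and $V_2=o(|x|^{-1})$ are tailored precisely so that the commutators $[V_1,\e^{F_n}]$ and the direct term $V_2\psi_n$ produce errors of the form $o(1)\|\psi_n\|^2$ (for $|x|$ large), absorbable into the main positive term.

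The main obstacle is (\ref{item:2}). In case (\ref{item:1}) the argument is essentially algebraic once ellipticity is combined with the gap $\lambda\notin\Ran Q$. In case (\ref{item:2}) one must simultaneously carry out two pseudo-differential tasks---the microlocal partition of unity near $\Sigma_\lambda$ and the conjugation by the Lipschitz weight $\e^{F_n}$---while tracking how high-order derivatives of $Q(p)$ interact with $F_n$ (whose derivatives need not decay) and with the limited regularity of $V_2$. The delicate point is to choose the escape symbol $a$ and the weight sequence $F_n$ compatibly, so that the remainders from sharp-G\aa rding, from truncation, and from the perturbation $V$ can all be absorbed into the single positive-commutator term.
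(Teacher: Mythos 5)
First, a point of order: this paper does not actually prove Theorem \ref{thm:start}; it is quoted from the companion paper \cite{HS} as part of the summary of earlier results, so there is no in-paper proof to compare against step by step. Judged against the machinery of \cite{HS} as it resurfaces in Section \ref{sec:calc-decay-rate}, your outline is the right one: case \ref{item:1} is a Combes--Thomas deformation exploiting the uniform bound $|Q(\xi)-\lambda|\ge c(1+|\xi|)^q$, and case \ref{item:2} is a positive-commutator argument of Froese--Herbst/Mourre type with escape function essentially $x\cdot\nabla Q(\xi)$ --- which is, up to normalization, the operator $A_c=\Opw(rY)$ with $Y=\Im Q(\xi+\i\nabla f_n)\approx\nabla Q(\xi)\cdot\nabla f_n$ used in Propositions \ref{just energy} and \ref{mainthm}. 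Your case \ref{item:1} is essentially complete as a plan; the only care needed is that $Q(p-\i\nabla F_n)$ has $x$-dependent coefficients, so the inversion is a quantitative perturbation of a constant-coefficient multiplier, which your sharp-G\aa{}rding remark covers.

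The gap is in case \ref{item:2}, precisely at the point where something must become strictly positive. You assert that the imaginary part of the deformed symbol ``combined with $a$'' yields a commutator $\ge c\sigma\|\psi_n\|^2$, but you do not identify the source of this positivity on the characteristic variety, nor account for the terms that fight it. Taking $2\Im\langle A_c\psi_n,(\tilde X+\i\tilde Y)\psi_n\rangle$ produces two main contributions: the bracket $\{Q,x\cdot\nabla Q\}=|\nabla Q|^2$, bounded below near $\Sigma_\lambda$ by non-criticality, and the weight term $\sum_{i,j}r\,\overline{\partial_iQ}\,(\partial_i\partial_jF_n)\,\partial_jQ$ as in \eqref{eq:pois]}. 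If $F_n$ is built from $\sigma\langle x\rangle$ regularized by $r/(1+r/n)$, the Hessian contributes a genuinely \emph{negative} piece of size $O(\sigma/r)|\nabla Q|^2$ from the $n$-regularization, and its convex part is degenerate in the radial direction; these must be absorbed into the $|\nabla Q|^2$ term (possible here only because $\sigma$ is small), and elsewhere in \cite{HS} this degeneracy forces the special weight $r_\epsilon=\langle x\rangle-\langle x\rangle^{1-\epsilon}+1$ with strictly positive Hessian. Relatedly, the localization ``near $\Sigma_\lambda$ versus its complement'' must be done with cutoffs in the deformed quantity $|Q(\xi+\i\nabla F_n)-\lambda|^2\lessgtr\kappa$ (the $\chi_\mp$ of \cite{HS}), not a pure frequency cutoff $\chi(p)$: both regions move with $x$ through $\nabla F_n$, and commuting a fixed $\chi(p)$ past $\e^{F_n}$ reintroduces exactly the errors you are trying to avoid. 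Finally, the hypothesis $V_2=o(|x|^{-1})$ (stronger than the $o(|x|^{-1/2})$ of Theorems \ref{thm:global decay rate determined} and \ref{rademacher1}) is not decorative: in the initial step one sends $\sigma\to0$, the coercive term $\tilde Yr\tilde Y=O(\sigma^2)$ is unavailable to absorb $\|r^{-1/2}A_c\psi_n\|^2$, so the pairing with $V_2$ must be estimated as $\|r^{-1}A_c\psi_n\|\,\|rV_2\psi_n\|$ with $rV_2=o(1)$. A complete proof has to make this bookkeeping explicit, since it is exactly where the exponents in the hypotheses are used.
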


Earlier work for the Laplacian can be found in \cite{Oc, CT, FH, MP1}.
Carleman type estimates which can be useful in proving part ~\ref{item:2} of
Theorem \ref{thm:start} for even more general operators
were proved in \cite{MP2}.

 The following theorem eliminates the possibility of super-exponential decay at the expense of rather strong decay assumptions on the potential:

\begin{thm}\label{thm:super}
Suppose $V_2(x) = O(|x|^{-q/2 -\delta})$ and
$\partial^{\alpha}V_1(x) = O(|x|^{-(\delta+q +|\alpha|)/2})$,  $1\le
|\alpha| \le q$, where $\delta > 0$.  Then
  $\sigma_g<\infty$ unless
  $\phi=0$.

\end{thm}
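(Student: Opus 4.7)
The plan is to argue by contradiction with a weighted $L^2$ analysis in the large parameter $\tau>0$. Suppose $\phi\not\equiv 0$ yet $\sigma_g=\infty$, so that $\e^{\tau|x|}\phi\in L^2$ for every $\tau>0$. Choose a smooth, strictly convex regularization $F_\tau(x)=\tau\sqrt{1+|x|^2}$ of $\tau|x|$, set $\alpha(\tau)=\|\e^{F_\tau}\phi\|_{L^2}$, and define the normalized conjugated eigenfunction $\phi_\tau=\alpha(\tau)^{-1}\e^{F_\tau}\phi$. Then $\|\phi_\tau\|_{L^2}=1$ and the conjugated eigenvalue equation reads
\[
T_\tau\phi_\tau=0,\qquad T_\tau:=\e^{F_\tau}(H-\lambda)\e^{-F_\tau}=Q(p+\i\nabla F_\tau)+V-\lambda.
\]

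A first observation is that unique continuation for our elliptic operator forces $\phi$ to be non-trivial on every annulus $\{R\le|x|\le R+1\}$, so $\alpha(\tau)\ge c_R\e^{\tau R}$ for every $R>0$. Consequently, for every fixed $R$ the mass $\int_{|x|\le R}|\phi_\tau|^2\to 0$ as $\tau\to\infty$. In particular $\langle\phi_\tau,V_2\phi_\tau\rangle=o(1)$ by the assumed decay of $V_2$, while the contribution of the merely bounded $V_1$ remains $O(1)$.

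The key quantitative step is to pair $T_\tau\phi_\tau=0$ with $\phi_\tau$ in $L^2$ and take the real part, obtaining
\[
\bigl\langle\phi_\tau,\,\Re Q(p+\i\nabla F_\tau)\,\phi_\tau\bigr\rangle=\lambda-\bigl\langle\phi_\tau,V\phi_\tau\bigr\rangle=O(1).
\]
Using the Taylor-type identity
\[
\Re Q(\xi+\i\eta)=\sum_{|\alpha|\text{ even}}\frac{(-1)^{|\alpha|/2}}{\alpha!}\eta^\alpha\,\partial^\alpha Q(\xi),
\]
with $\eta=\nabla F_\tau$, the top-$\tau$ contribution is $(-1)^{q/2}Q_q(\nabla F_\tau)\sim\tau^q(-1)^{q/2}Q_q(\omega(x))$, where $\omega=x/|x|$. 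Since $Q$ is real elliptic, $Q_q$ is of constant sign on the unit sphere (with $q$ even in the case $d\ge 2$), so after possibly replacing $Q$ by $-Q$ one has $(-1)^{q/2}Q_q(\omega)\ge c_0>0$. Thus the leading contribution on the left is at least $c_0\tau^q\bigl(1+o(1)\bigr)$, and a contradiction with the $O(1)$ bound on the right will follow once the subleading-in-$\tau$ contributions are shown to be $o(\tau^q)$.

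The main obstacle is precisely this last point. The subleading terms of $\Re Q(p+\i\nabla F_\tau)$ are differential operators of positive order in $p$ multiplied by lower powers of $\tau$, and their expectations in $\phi_\tau$ are not a priori small: the symbol $\Re Q(\xi+\i\tau\omega)$ is not bounded below as a function of $\xi$, since it vanishes on a non-trivial ``complex characteristic'' set. To overcome this I would complement the real-part identity with a virial identity $\langle\phi_\tau,[\i A,T_\tau]\phi_\tau\rangle=0$ using the $F_\tau$-adapted weighted dilation $A=\tfrac{1}{2}(\nabla F_\tau\cdot p+p\cdot\nabla F_\tau)$. An Euler/Hadamard-type computation shows that $[\i A,Q(p+\i\nabla F_\tau)]$ reproduces a positive multiple of the leading $\tau^q$ quantity, plus error terms arising from derivatives of $\nabla F_\tau$ and of $V$; the decay hypotheses $|\partial^\alpha V_1|=O(|x|^{-(\delta+q+|\alpha|)/2})$ for $1\le|\alpha|\le q$ and $|V_2|=O(|x|^{-q/2-\delta})$ are exactly what is required to absorb these errors into $o(\tau^q)$ once the previously established concentration of $\phi_\tau$ at large $|x|$ is invoked. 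Combining the real-part and virial identities and letting $\tau\to\infty$ then closes the contradiction.
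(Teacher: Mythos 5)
Your strategy (conjugate by an exponential weight, normalize, and play a real-part identity against a commutator identity) is in the same Froese--Herbst family as the actual proof (Theorem 1.4 of \cite{HS}, whose mechanism is recalled in the proof of Proposition \ref{nosuper} of this paper), but the proposal is incomplete at precisely the points where the work lies. The central claim --- that $[\i A,Q(p+\i\nabla F_\tau)]$ ``reproduces a positive multiple of the leading $\tau^q$ quantity'' which dominates all errors --- is asserted rather than computed, and with your weight $F_\tau=\tau\langle x\rangle$ it is doubtful. The positivity obtainable from such commutators is governed by the Hessian $\nabla^2F_\tau$, and $\nabla^2\langle x\rangle$ is degenerate in the radial direction (its radial eigenvalue is $O(\langle x\rangle^{-3})$), which is exactly the direction carrying the large imaginary momentum $\nabla F_\tau$. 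The proof in \cite{HS} uses the weight $r_\epsilon=\langle x\rangle-\langle x\rangle^{1-\epsilon}+1$ precisely because $\nabla^2 r_\epsilon\geq c\epsilon\langle x\rangle^{-1-\epsilon}$ in \emph{all} directions; with $a=p-\i\sigma\nabla r_\epsilon$ the key identity is
\begin{align*}
\bigl\langle\phi_\sigma,\bigl([Q(a),Q(a^*)]+|Q(a)+V-\lambda|^2\bigr)\phi_\sigma\bigr\rangle
=\bigl\langle\phi_\sigma,\bigl(2\Re[V_1,Q(a)]+|V_2|^2\bigr)\phi_\sigma\bigr\rangle,
\end{align*}
whose left side is bounded below by $c\,\sigma^{2q-1}\|r^{-(1+\epsilon)/2}\phi_\sigma\|^2$ (ellipticity plus the Hessian bound), and this dominates the right side for large $\sigma$ under exactly the stated decay of $V_1,V_2$, forcing $\phi_\sigma=0$. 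Note also that your ``$c_0\tau^q$ versus $O(1)$'' bookkeeping is not the right comparison: operator terms such as $Q_q(p)$ contribute at order $\tau^q$ as well, since the momentum of $\phi_\tau$ concentrates at scale $\tau$ (already for $Q=|\xi|^2$ the real-part identity is perfectly consistent with superexponential decay); the usable positivity sits on the characteristic set and is of size $\sigma^{2q-1}\times(\text{Hessian lower bound})$, not $\tau^q$.

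Two further gaps. First, the identity $\langle\phi_\tau,[\i A,T_\tau]\phi_\tau\rangle=0$ is false as written, because $T_\tau$ is not symmetric: $T_\tau\phi_\tau=0$ but $T_\tau^{*}\phi_\tau\neq0$. The correct starting point is $0=\Im\langle A\phi_\tau,T_\tau\phi_\tau\rangle$, which mixes the commutator of $A$ with $\Re T_\tau$ and the \emph{anticommutator} of $A$ with $\Im T_\tau$; this changes the structure of the error terms you must absorb. Second, the appeal to unique continuation to get $\alpha(\tau)\geq c_R\e^{\tau R}$ (hence concentration of $\phi_\tau$ at infinity) is unjustified: for higher-order constant-coefficient elliptic operators with merely bounded potentials, unique continuation --- even just excluding compactly supported eigenfunctions --- is delicate and is essentially a special case of the theorem you are trying to prove, so this step is circular as it stands. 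The paper's route needs neither normalization nor unique continuation: the weighted identity above directly yields $\phi_\sigma=0$ for all sufficiently large $\sigma$ whenever $\e^{\sigma r_\epsilon}\phi\in L^2$, and hence $\phi=0$ if $\sigma_g=\infty$.
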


For $Q(\xi) = |\xi|^2$ or $|\xi|^4$ (and perhaps for any real elliptic $Q$)  one can do with weaker decay assumptions on $V$, see \cite{HS}. In fact for $Q(\xi) = |\xi|^2$ or $|\xi|^4$, in the conditions on $V, q $ can be replaced by $q/2$.  (Of course the results given in \cite{HS} for the Laplacian were known, see \cite{BM, FHH2O1, FHH2O2, FHH2O3, FH}.)

With the above two theorems we have conditions on $V$ which guarantee that $0 < \sigma_g < \infty$.  We will assume the latter in the rest of this paper.

The next theorem shows that  $\sigma_g$ must satisfy certain equations
which in  favorable situations  determine its possible values.

\begin{thm}\label{thm:global decay rate determined}

Suppose $0 < \sigma_g < \infty$.  If
\begin{align*}
&\forall \alpha: \partial^{\alpha}V_1(x) = o(|x|^{- |\alpha|}),\\
&V_2(x) = o(|x|^{-1/2}),
\end{align*}
then there exists $(\omega, \xi, \beta) \in S^{d-1}\times \mathbb{R}^d \times \mathbb{C}$ such that
 \begin{subequations}
  \begin{align}\label{eq:1}
    Q(\xi+\i \sigma_g\omega)&=\lambda,\\
    \nabla_\xi Q(\xi +\i \sigma_g \omega)&= \beta \omega.\label{eq:2}
  \end{align}
  \end{subequations}
\end{thm}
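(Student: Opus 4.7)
The geometric content of (\ref{eq:1})--(\ref{eq:2}) is this: they are exactly the Lagrange multiplier conditions for a point $k_\ast = \xi + \i\sigma_g\omega$ to minimize $|\Im k|$ on the complex characteristic variety $\Gamma := \{k\in\C^d : Q(k)=\lambda\}$. Indeed, viewing the two real constraints $\Re Q = \lambda$, $\Im Q=0$ and computing in $\R^{2d}$-coordinates, the condition that the gradient of $|\Im k|^2$ lie in the real span of the gradients of the constraints reduces precisely to $\nabla_\xi Q(k_\ast) = \beta\omega$ for some $\beta \in \C$. So the theorem really asserts that the global decay rate $\sigma_g$ equals $\inf\{|\Im k| : k\in \Gamma\}$ and that this infimum is attained.

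My plan is to identify a direction of weakest exponential decay of $\phi$ and extract the asymptotic complex momentum there. First, I would introduce a directional decay function on the sphere,
\begin{equation*}
\tau(\omega) = \sup\bigl\{\sigma\geq 0 : \e^{\sigma\omega\cdot x}\chi_\omega\phi\in L^2\bigr\},
\end{equation*}
with $\chi_\omega$ a suitable cone cutoff around $\omega$; verify lower semicontinuity of $\tau$ and that $\sigma_g = \inf_{\omega\in S^{d-1}}\tau(\omega)$; and then select $\omega_\ast\in S^{d-1}$ attaining the infimum, which exists by compactness. The global rate $\sigma_g$ is then realized as $\tau(\omega_\ast)$, and the task becomes to analyze $\phi$ at the critical weight $f(x) = \sigma_g\,\omega_\ast\cdot x$.

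Second, conjugation gives the equation $\bigl(Q(p - \i\sigma_g\omega_\ast) + V - \lambda\bigr)\psi = 0$ for $\psi = \e^f\phi$, where $\psi$ sits at the boundary of $L^2$-integrability (precisely because $f$ is the critical weight in the direction $\omega_\ast$). The decay hypotheses on $V$ — in particular $V_2 = o(|x|^{-1/2})$ and $\partial^\alpha V_1 = o(|x|^{-|\alpha|})$ — are exactly those needed to run commutator and microlocal estimates of the kind developed in \cite{HS} on this critical conjugated operator, in order to identify an asymptotic real momentum $\xi\in\R^d$ for $\psi$ with the dispersion relation $Q(\xi + \i\sigma_g\omega_\ast) = \lambda$. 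This produces (\ref{eq:1}).

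Finally, (\ref{eq:2}) should come out as a \emph{variational} consequence of $\omega_\ast$ being a minimizer of $\tau$ on the sphere. Combining the constraint $Q(\xi + \i\sigma\omega) = \lambda$ near $(\omega_\ast,\xi,\sigma_g)$ with the fact that no perturbation of $\omega$ can decrease $\sigma$, the Lagrange multiplier computation above forces $\nabla_\xi Q(\xi + \i\sigma_g\omega_\ast) = \beta\omega_\ast$ for a suitable $\beta\in\C$. The main obstacle is the second step: rigorously extracting a sharp real asymptotic momentum $\xi$ from the boundary $L^2$ behaviour of $\psi$, since the operator $Q(p - \i\sigma_g\omega_\ast) - \lambda$ is generally non-elliptic and non-self-adjoint, and one must exploit the potential hypotheses quite carefully (via Mourre/commutator-type estimates from \cite{HS}) to localize $\psi$ microlocally near a single point of the complex characteristic set. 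The first and third steps, by contrast, are a soft compactness argument and a finite-dimensional Lagrange multiplier calculation, respectively.
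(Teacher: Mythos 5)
There is a genuine gap, and also a conceptual misreading of what the theorem asserts. First, your opening claim that the theorem ``really asserts that $\sigma_g$ equals $\inf\{|\Im k|: Q(k)=\lambda\}$ and that this infimum is attained'' is false: equations \eqref{eq:1}--\eqref{eq:2} characterize \emph{critical} points of $|\Im k|$ on the variety, not minimizers, and the theorem only places $\sigma_g$ among the critical values. Indeed \cite{HS} shows that (away from finitely many exceptional $\lambda$'s) \emph{every} positive solution $\sigma_g$ of this system is realized by some compactly supported $V$, and Theorem \ref{rotationinv} lists up to $q/2$ admissible values; if $\sigma_g$ were always the minimum of $|\Im k|$ on $\Gamma$, only one value could ever occur. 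This matters because it signals that the result cannot be obtained by a pure minimization argument over the characteristic variety.

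The substantive gap is your third step. Deriving the gradient condition \eqref{eq:2} from the minimality of $\tau$ at $\omega_\ast$ by a ``finite-dimensional Lagrange multiplier calculation'' presupposes (i) that $\tau$ is differentiable at the minimizing direction (only Lipschitz continuity of $1/\sigma_c$ is available, and Rademacher gives differentiability only a.e., not at a prescribed point), and (ii) that one has a differentiable selection $\omega\mapsto(\xi(\omega),\tau(\omega))$ solving $Q(\xi+\i\tau\omega)=\lambda$ to which the first-order condition can be applied. Neither is available: the energy relation \eqref{eq:1} alone typically admits a whole interval of $\sigma$'s for each $\omega$, so knowing that $\tau(\omega)$ satisfies it for \emph{some} $\xi$ (your step 2, which is Proposition \ref{just energy}) does not determine $\tau$, let alone its derivative. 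In the actual argument (in \cite{HS}, reproduced in refined form here as Proposition \ref{mainthm} and Theorem \ref{rademacher1}), equation \eqref{eq:2} is \emph{not} a soft variational consequence; it is produced by contradiction through the same hard analysis as \eqref{eq:1}. One assumes that on the entire surface $\{\xi: Q(\xi+\i\sigma_g\omega)=\lambda\}$ the component of $\nabla Q$ orthogonal to $\omega$ is bounded below by some $\delta_1>0$; the convexity of the regularized radial weight $r=\langle x\rangle-\langle x\rangle^{1-\epsilon}$ in the directions orthogonal to $x/|x|$ then makes the Poisson bracket symbol $r\{X,Y\}$ strictly positive on the relevant phase-space region, and the commutator estimates of \cite{HS} (this is exactly where $V_2=o(|x|^{-1/2})$ and $\partial^\alpha V_1=o(|x|^{-|\alpha|})$ enter) upgrade the decay to $\e^{(\sigma_g+\epsilon)|x|}\phi\in L^2$, contradicting the definition of $\sigma_g$. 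The Lagrange-multiplier picture you describe appears in the paper only as an a posteriori interpretation (Remark after Theorem \ref{rademacher1}), not as a proof mechanism. Your step 2 is likewise stronger than what is proved or needed: one does not localize $\psi$ ``near a single point'' of the characteristic set; one only shows the set of solutions of \eqref{eq:1} is nonempty, again by contradiction.
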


Note that the number of real unknowns indicated by  $\sigma_g, \omega,
\xi, \beta$ equals the number of real equations in (\ref{eq:1}) and
(\ref{eq:2}) and thus the set of  $\sigma_g$ occurring as solutions of
these equations has a chance of being discrete. In fact except for a
finite set of  exceptional $\lambda$'s this is true if $Q$ is
rotationally  invariant (one might say in spite of the rotation invariance).   In \cite{HS} it is shown that  except possibly for this finite set of $\lambda$'s every solution $\sigma_g > 0$ of these equations actually occurs for a real, smooth $V$ of compact support.

In this paper we will study quantities  somewhat similar  to
\eqref{eqn:sigmac}. One of those is a rough measure of   the asymptotics
 at infinity. It is the {\it local decay
  rate}  of  any  
$\phi\in L^2$ defined for $\omega \in S^{d-1}$ by 
\begin{align}\label{eq:loca}
  \sigma_{loc}(\omega) = \sup\{\sigma | \e^{\sigma |x| }\phi \in L^2(C)
  \ \text{for some open cone C containing} \ \omega \}.
\end{align} 

In the next section we introduce in addition two other measures of
exponential rate of decay which also depend on direction.  Those
notions appear more amenable to analysis than the local decay rate,
but as we will see our study of these other notions of decay yields
information on $\sigma_{loc}(\omega) $. Our main result will be
presented in Section \ref{sec:calc-decay-rate}, see Theorem
\ref{rademacher1}. It allows us to some extent to calculate rates of
decay of eigenfunctions in $L^2$, most notably for  
  rotationally invariant $Q$'s, see Theorem \ref{rotationinv}
  (announced earlier in \cite{HS}). This is in the spirit of Theorem
\ref{thm:global decay rate determined}, that is by solving a certain
system of algebraic equations. We give another demonstration of our
results for an example in Section \ref{sec:example} (a
non-rotationally invariant case). In Section \ref{sec:Agmon} we
elaborate on a connection to previous works \cite{Ag1, Ag2}.  We show
how the above mentioned system of algebraic equations relates to
\cite{Ag1, Ag2} and in fact, more generally, can be derived
by a variational principle.  In Subsection \ref{Green's function} we discuss the exponential decay of the Green's function when the spectral parameter is outside $\text{Ran}Q$.  Finally we
have collected various considerations on possible smoothness of rates
of decay of eigenfunctions in Section \ref{sec:set-barmathcale}.

  \section{Directional decay rates, arbitrary
    $\phi$}\label{sec:direct-decay-rates}

In this section $\phi$ is an arbitrary function in $L^2(\mathbb{R}^d)$ 
with $0 < \sigma_g < \infty$ where  $\sigma_g$ is defined in (\ref{eqn:sigmac}).  Note that we do not assume that $\phi$
is an eigenfunction. The basic object which incorporates information
on the directional decay rates of $\phi$ and which we find most amenable to analysis is the set
\begin{equation}\label{eq:E}
  \mathcal{E} = \{\eta \in \mathbb{R}^d | \e^{\eta \cdot x}\phi \in L^2\}.
\end{equation}
We introduce three exponential decay rates depending on a direction  $\omega \in S^{d-1}$.
\begin{align*}
&\sigma_c(\omega) = \sup\{\sigma | \e^{\sigma \omega \cdot x} \phi \in L^2\}\\
&\sigma_s(\omega) = \sup\{\eta \cdot \omega | \eta \in \mathcal{E}\} \\
&\sigma_{loc}(\omega) = \sup\{\sigma | \e^{\sigma |x| }\phi \in L^2(C) \  \text{for some open cone C containing} \ \omega \}
\end{align*}

It is easy to see that
\begin{align*}
  \sigma_g \le \sigma_c(\omega) \le \sigma_s (\omega) \le \sigma_{loc}(\omega).
\end{align*}

Note that $\sigma_s$, as the supremum of a family of continuous
functions, is lower semi-continuous.  In addition if we define
$\sigma_s(t\omega)  =  t \sigma_s(\omega) $ for $t\ge 0$, then
$\sigma_s(x)$ is the support function of the set $\mathcal{E}$ (by definition $0\cdot \infty = 0$).

Here are some basic facts which are true for an arbitrary $\phi \in
L^2$ if $\sigma_g \in (0, \infty)$.  We allow $\sigma_c(\omega) =
\infty$ in which case we define $1/\sigma_c(\omega) = 0$. Since
$\sigma_g<\infty$ a simple compactness argument shows that $\sigma_c(\omega) <
\infty$ for at least one $\omega$, in fact $\sigma_g =\inf_{\omega}\sigma_{loc}(\omega)$. By $B_r(x)$ we mean the open
  ball in $\R^d$ of radius $r$ centered at $x$.

\begin{thm} \label{3sigmas}
\begin{enumerate}[1)]
\item \label{item:1a} $\mathcal{E}\ \text{is convex}$ and contains
  $B_{\sigma_g}(0)$.
\item
$1/\sigma_c(\omega)\text{ is Lipschitz.  In fact}\
|1/\sigma_c(\omega_1) - 1/\sigma_c(\omega_2)|  \le |\omega_1 -
\omega_2|/\sigma_g.$ In particular the set $\{\omega \in S^{d-1}|
\sigma_c(\omega) < \infty\}$ is a relatively open subset of $S^{d-1}$.
\item
$\partial \mathcal{E} = \{\sigma_c(\omega)\omega | \omega \in S^{d-1}, \sigma_c(\omega) < \infty\}.$
\item
Suppose $f : \mathbb{R}^d \rightarrow [0, \infty)$ is convex and $f(tx) = tf(x)$ for all $t\ge 0$.   Suppose in addition 
\begin{align*}
 \e^{tf}\phi \in L^2\ \text{for all} \ t < 1.
 \end{align*}
 Then $f(x) \le \sigma_s(x)$.
\item
The function $\sigma_{loc}$ is lower semi-continuous.  Suppose
$\rho:\mathbb{R}^d \to [0,\infty)$ is continuous, $\rho(tx) =
t\rho(x)$ for all $t\ge 0$ and $\rho(\omega) \le \sigma_{loc}(\omega)$
for all $\omega\in S^
{d-1}$. Then
\begin{align*}
 \e^{t\rho}\phi \in L^2\ \text{for all} \ t < 1.
 \end{align*}
\end{enumerate}
\end{thm}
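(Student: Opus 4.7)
The five parts are essentially exercises in convex analysis applied to the set $\mathcal{E}$, and I would handle them in order.

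For part 1, convexity is log-convexity of an $L^2$-norm of $e^{\eta\cdot x}\phi$ in $\eta$: given $\eta_1,\eta_2\in\mathcal{E}$ and $t\in[0,1]$, H\"older's inequality with exponents $1/t$ and $1/(1-t)$ applied to $|e^{(t\eta_1+(1-t)\eta_2)\cdot x}\phi|^2=|e^{\eta_1\cdot x}\phi|^{2t}|e^{\eta_2\cdot x}\phi|^{2(1-t)}$ does the job. For the inclusion $B_{\sigma_g}(0)\subset\mathcal{E}$, note that if $|\eta|<\sigma_g$ then $|e^{\eta\cdot x}\phi|\le e^{|\eta||x|}|\phi|\in L^2$ by the definition of $\sigma_g$.

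For part 2, the key observation is that $1/\sigma_c(\omega)$ is exactly the Minkowski gauge $\mu_{\overline{\mathcal{E}}}(\omega)=\inf\{t>0:\omega/t\in\overline{\mathcal{E}}\}$ of the closed convex set $\overline{\mathcal{E}}$. Since $\overline{\mathcal{E}}\supset \overline{B_{\sigma_g}(0)}$, I would apply the standard fact that the gauge of a convex body containing $B_r(0)$ is positively homogeneous, convex, and globally $1/r$-Lipschitz (one-line argument: $\mu(x)\le\mu(y)+\mu(x-y)\le\mu(y)+|x-y|/r$). This gives the asserted bound and, restricted to $S^{d-1}$, shows that $\{\omega:\sigma_c(\omega)<\infty\}=\{\omega:\mu(\omega)>0\}$ is relatively open. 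Part 3 then drops out: $y\in\partial\mathcal{E}\setminus\{0\}$ is equivalent, by convexity with $0\in\mathrm{int}\,\mathcal{E}$, to saying that $y=\sigma_c(y/|y|)\,y/|y|$ with the radial value finite.

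For part 4, I would invoke the support-function/subdifferential duality. Any convex, positively $1$-homogeneous, nonnegative $f$ has the representation $f(x)=\sup_{\eta\in K_f}\eta\cdot x$, where $K_f=\partial f(0)=\{\eta:\eta\cdot x\le f(x)\ \forall x\}$ is closed, convex, and nonempty. Fixing $t<1$, for each $\eta\in tK_f$ I have the pointwise bound $\eta\cdot x\le tf(x)$, and since $tf\ge 0$ this yields $e^{\eta\cdot x}\le e^{tf(x)}$ for every $x$; hence $e^{\eta\cdot x}\phi\in L^2$, i.e.\ $tK_f\subset\mathcal{E}$. Taking the support function of both sides at $x$ gives $tf(x)\le\sigma_s(x)$, and letting $t\uparrow 1$ yields $f(x)\le\sigma_s(x)$. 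This is the step I expect to be the subtlest, since it hinges on recognizing that one must insert the factor $t$ inside the subdifferential, not outside, to use $f\ge 0$ to control the sign of $\eta\cdot x$.

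Part 5 is a compactness argument. Lower semi-continuity is immediate: if $\sigma_{\loc}(\omega)>s$, then by definition there is an open cone $C\ni\omega$ with $e^{s'|x|}\phi\in L^2(C)$ for some $s'\in(s,\sigma_{\loc}(\omega))$, and every $\omega'\in C\cap S^{d-1}$ inherits $\sigma_{\loc}(\omega')\ge s'>s$. For the second statement, fix $t<1$ and $\omega\in S^{d-1}$; since $t\rho(\omega)<\rho(\omega)\le\sigma_{\loc}(\omega)$ I can choose $s$ strictly between and an open cone $C_\omega\ni\omega$ with $e^{s|x|}\phi\in L^2(C_\omega)$. Continuity of $\rho$ lets me shrink $C_\omega$ so that $t\rho(x)<s|x|$ throughout $C_\omega$, giving $|e^{t\rho}\phi|\le e^{s|x|}|\phi|\in L^2(C_\omega)$. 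Compactness of $S^{d-1}$ produces a finite subcover $C_{\omega_1},\dots,C_{\omega_n}$ of $\mathbb{R}^d\setminus\{0\}$, and summing yields $e^{t\rho}\phi\in L^2(\mathbb{R}^d)$.
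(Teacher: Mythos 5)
Your proposal is correct in all five parts. Parts 1, 4 and 5 run essentially parallel to the paper's argument: in 1) you use H\"older (log-convexity of the norm) where the paper uses Young's inequality on the integrand, an immaterial difference; in 4) your representation $f(x)=\sup_{\eta\in\partial f(0)}\eta\cdot x$ is the same device as the paper's supremum over the union of all subgradient sets $G(x_0)$ after homogeneity has been used to kill the affine constants, and both proofs then conclude via $t\,\partial f(0)\subset\mathcal{E}$; your part 5 is the paper's covering argument verbatim (your choice of $s$ strictly between $t\rho(\omega)$ and $\sigma_{loc}(\omega)$ is always possible because $\sigma_{loc}\geq\sigma_g>0$, which you should perhaps note for the case $\rho(\omega)=0$). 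The genuine difference is in parts 2 and 3, where you identify $1/\sigma_c(\omega)$ with the Minkowski gauge $\mu_{\bar{\mathcal{E}}}$ and quote subadditivity plus $\mu(z)\le|z|/\sigma_g$, whereas the paper proves the Lipschitz bound by an explicit convex-combination computation ($\sigma\omega_2=(1-t)\eta_1+t\eta_2$ with an optimized $t$) and then establishes 3) by showing directly that $\eta\in\bar{\mathcal{E}}$ implies $s\eta\in\operatorname{int}\mathcal{E}$ for $s<1$. Your route is shorter and more conceptual, at the cost of importing the standard facts that the gauge of a convex set containing $B_{\sigma_g}(0)$ is sublinear with $\mu(z)\le|z|/\sigma_g$, that $\sigma\omega\in\bar{\mathcal{E}}$ iff $\sigma\le\sigma_c(\omega)$ (so that $\mu_{\bar{\mathcal{E}}}(\omega)=1/\sigma_c(\omega)$), and that $\partial\mathcal{E}=\partial\bar{\mathcal{E}}=\{\mu=1\}$ for a convex set with nonempty interior; the paper's hands-on version proves exactly these facts in situ. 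Either way the mathematics is sound.
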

\begin{proof}
{\bf 1) }
Take $\eta_j \in \mathcal{E}, j= 1,2$.  By the Young inequality for
any $s\in(0,1)$
\begin{align*}
  \e^{(s\eta_1+(1-s)\eta_2)\cdot x}\leq s\e^{\eta_1\cdot
    x}+(1-s)\e^{\eta_2\cdot x}.
\end{align*} This implies that $s\eta_1+(1-s)\eta_2\in
\mathcal{E}$. Similarly by \cs $B_{\sigma_g}(0)\subset \vE$.

{\bf 2) }
Given  $\omega_1$, $\omega_2$, $\omega_1\neq\omega_2$,  and $\mu \in(0, \sigma_c(\omega_1))$  we will choose
$\sigma$ so that $\sigma \omega_2 \in \mathcal{E}$ as follows.  We write
$$\sigma \omega_2 = \sigma \omega_1 + \sigma(\omega_2 - \omega_1),$$
and define $\eta_1 = \sigma \omega_1/(1-t)$ and $\eta_2 =
\sigma(\omega_2 - \omega_1)/t$ so that $\sigma \omega_2 = (1-t) \eta_1
+ t\eta_2.$ In order to have $ t \in (0,1)$, $\eta_1 \in \mathcal{E}$
and $\eta_2 \in \mathcal{E}$ (so that \ref{item:1a} applies), we demand
$0 < \sigma/(1-t) \le \mu$ and $\sigma|\omega_2 - \omega_1|/t <
\sigma_g$.  We choose $t$ so that $t \sigma_g/|\omega_2 - \omega_1| =
(1-t)\mu$.  We find that if $1/\sigma > 1/\mu + |\omega_2 -
\omega_1|/\sigma_g$ then indeed $\sigma \omega_2 \in \mathcal{E}$.  It
follows that $1/\sigma_c(\omega_2) \le 1/\sigma_c(\omega_1) +
|\omega_2 - \omega_1|/\sigma_g$.  Interchanging $\omega_2$ and
$\omega_1$ gives the result.

{\bf 3) } We first show that if $\eta \in \bar{\mathcal{E}}$ then
$s\eta \in \text{int}(\mathcal{E})$ for $0<s<1$.  Pick a sequence
$\eta_j \in \mathcal{E}$ with $\eta_j \rightarrow \eta$.  By
\ref{item:1a} $s\eta_j + (1-s)\zeta \in \mathcal{E}$ if
$|\zeta| < \sigma_g$.  This means $s\eta_j + B_{(1-s)\sigma_g}(0)
\subset \mathcal{E}$.  Since $s\eta_j \rightarrow s\eta$, $s\eta \in
B_{(1-s)\sigma_g}(s \eta_j) \subset \mathcal{E}$ for large enough $j$.
Whence indeed $s\eta \in \text{int}(\mathcal{E})$. We next show that if $\eta \in \partial \mathcal{E}$ then with $ \eta
= |\eta| \omega, \omega \in S^{d-1}$, we have $\sigma_c(\omega) =
|\eta|$.  Since $\eta \in \bar{\mathcal{E}}, s\eta \in
\text{int}(\mathcal{E})$ for $0<s<1$.  Thus $\sigma_c(\omega) \ge
|\eta|$.  Suppose $\sigma_c(\omega) > |\eta|$.  Fix $\sigma \in
(|\eta|, \sigma_c(\omega))$.  Then $\sigma \omega \in
\text{int}(\mathcal{E})$.  Fix $r> 0$ so that $\sigma \omega + B_r(0)
\subset \mathcal{E}$.  For $0 < s< 1$ we have $(1-t)s\eta + t (\sigma
\omega + B_r (0) \subset \mathcal{E}$ for any $t \in [0,1]$.  Let $t =
\frac{|\eta| - s|\eta|}{\sigma - s|\eta|}$.  Then $(1-t)s\eta + t
\sigma \omega = \eta$ so that $\eta + B_{tr}(0) \subset \mathcal{E}$
and thus $\eta$ is not a boundary point.  We have shown
$\sigma_c(\omega) = |\eta|$ or in other words $\sigma_c(\omega) <
\infty$ and $\eta = \sigma_c(\omega) \omega$.  If on the other hand
$\sigma_c(\omega) < \infty$ and $\eta = \sigma_c(\omega) \omega$, then
$(1-n^{-1})\sigma_c(\omega) \omega \in \mathcal{E}$ for all $n \in
\mathbb{N}$ by the definition of $\sigma_c(\omega)$.  Hence
$\sigma_c(\omega) \omega \in \bar{\mathcal{E}}$.  $\sigma_c(\omega)
\omega$ cannot be in $ \text{int}(\mathcal{E})$ by the definition of
$\sigma_c(\omega)$ so that $\sigma_c(\omega) \omega \in \partial
\mathcal{E}$.

{\bf 4) }
Since $f$ is convex, for each $x_0 \in \mathbb{R}^d$ there exists a
set of linear functions, $l_{\eta,x_0}(x) = f(x_0) + \eta \cdot
(x-x_0), \eta \in G(x_0)$ (here $G(x_0)$ is our notation for the set
of subgradients at $x_0$, see for
  example \cite[p. 214]{Ro}), so that
\begin{equation*}
f(x) = \sup\{l_{\eta,x_0}(x)| x_0 \in \mathbb{R}^d, \eta \in G(x_0)\}.
\end{equation*}

Using $t^{-1}f(tx) = f(x)$ we have $f(x) \ge  \eta \cdot x +t^{-1} l_{\eta, x_0}(0)$ for  $x_0 \in \mathbb{R}^d, \eta \in G(x_0)$  Taking $t$ to infinity we obtain $f(x) \ge \sup\{ \eta \cdot x|  \eta \in G\}$
where $G = \cup_{x_{0} \in \mathbb{R}^d} G(x_0)$.  But since $0 =f(0)
\ge l_{\eta, x_0}(0)$ we have  $f(x) = \sup\{ \eta \cdot x + l_{\eta,
  x_0}(0)| x_0 \in \mathbb{R}^d, \eta \in G(x_0)\} \le \sup\{ \eta
\cdot x| x_0 \in \mathbb{R}^d, \eta \in G(x_0)\}$.  Thus $ f(x) =
\sup\{\eta\cdot x| \eta \in G\}$.  This means that if
  $\eta \in G$ then $\e^{t\eta \cdot x} \phi \in L^2$ for all $t \in [0,1)$ so that $G \subset \bar{ \mathcal{E}}$.  Thus
\begin{equation*}
f(x) \le \sup\{ \eta \cdot x| \eta \in \bar{ \mathcal{E}} \} = \sigma_s(x).
\end{equation*} 

{\bf 5) }
Given $\alpha \in \mathbb{R}$ the set $\{\omega \in S^{d-1}| \sigma_{loc}(\omega) > \alpha\}$ is open by definition of $ \sigma_{loc}$ (allowing $\sigma_{loc}(\omega) = \infty$).  Thus again by definition,  $\sigma_{loc}$ is lower semi-continuous.
Suppose $\sigma_{loc}(\omega_0) = \infty$.  By the continuity of
$\rho$ we can find an open cone $C_{\omega_0}$ containing $\omega_0$
and $\sigma_0 \in (0, \infty)$ so that  if $\omega \in S^{d-1} \cap
C_{\omega_0}$ then $\rho(\omega) < \sigma_0$.  By shrinking
$C_{\omega_0}$ if necessary, we can assume $\e^{\sigma_0|x|} \phi \in
L^2(C_{\omega_0})$.  Thus $\e^{t\rho} \phi \in  L^2(C_{\omega_0})$ for
all $t \in [0,1]$.  If  $\sigma_{loc}(\omega_0) < \infty$, given $t
\in [0,1)$  we can find $\sigma_0$ such that $t\sigma_{loc}(\omega_0) < \sigma_0$ with $\e^{\sigma_0 |x|}\phi \in L^2(C_{\omega_0})$ where $C_{\omega_0}$ is an open cone containing $\omega_0$.  By assumption $t \rho(\omega_0) < \sigma_0$.  Thus by continuity there  is a smaller open cone $\tilde C_{\omega_0} \ni \omega_0$ so that  $t \rho(\omega) < \sigma_0$ for $\omega \in \tilde C_{\omega_0}$ which implies $\e^{t\rho}\phi \in L^2(\tilde C_{\omega_0})$.  The result then follows by the compactness of $S^{d-1}$ and a covering argument.

\end{proof}

As we will see in the next section, if $\phi$ is an eigenfunction of $H = Q(p) + V(x)$ with eigenvalue $\lambda$, under favorable conditions we will be able to calculate the possible values of  $\sigma_s(\omega)$ from our knowledge of $Q(\xi)$ and the eigenvalue $\lambda$.  We do not have a direct method of calculating $\sigma_{loc}(\omega)$.  Thus it is important to know when $\sigma_{loc}(\omega) = \sigma_s(\omega)$.

We call the (affine) hyperplane, $0=(\eta - \eta_0)\cdot \omega_0$, with parameters $(\omega_0, \eta_0) \in S^{d-1}\times \partial \mathcal{E}$,  a supporting hyperplane if $(\eta - \eta_0)\cdot \omega_0 \le 0$ for all $\eta \in \bar {\mathcal{E}}$.  Every point $\eta_0 \in \partial \mathcal{E}$ has at least one supporting hyperplane (\cite {Ro}, p.100).  Note that by definition of $\sigma_s$, if the hyperplane with parameters $(\omega_0, \eta_0)$ is a supporting hyperplane, $\sigma_s(\omega_0) = \eta_0 \cdot \omega_0$.  If there is a unique supporting hyperplane passing through $\eta_0 \in \partial \mathcal{E}$ we call $\eta_0$ a \emph {regular point} of $\partial \mathcal{E}$.   Otherwise  we refer to $\eta_0 \in \partial \mathcal{E}$ as a \emph {singular point}.  If $\eta_0$ is a regular point then $\partial \mathcal{E}$, parametrized by $\eta = \sigma_c(\omega) \omega$  with $\omega \in S^{d-1}$, is differentiable at $\eta_0$.  (Using the coordinates of some plane through the origin of dimension $d-1$, $\partial \mathcal{E}$ can be written as the graph of a convex function $f$.  The function $f$ is differentiable at a point $x_0$ if and only if $f$ has a unique subgradient at $x_0$ (\cite{Ro}, p. 242). This is the same as saying that $\bar{\mathcal{E}}$ has a unique supporting hyperplane at $(x_0, f(x_0))$.  Note that from Theorem \ref{3sigmas}, $\partial\mathcal{E}$ is Lipschitz, so that by Rademacher's theorem it is given locally by a function differentiable almost everywhere.)  Note also that if all points in $\partial \mathcal{E}$ are regular, then $\partial \mathcal{E}$ is $C^1$ (\cite {Ro}, p. 246).

\begin{thm} \label{regularpoints}
Suppose $\omega_0 \in S^{d-1}$ is given so that for \emph{some} regular point $\eta_0 \in \partial \mathcal{E}$  the hyperplane with parameters $(\omega_0, \eta_0)$ is a supporting hyperplane.  Then
\begin{equation}\label{k=loc}
\sigma_{loc}(\omega_0) = \sigma_s(\omega_0)=\eta_0\cdot \omega_0.
\end{equation}
\end{thm}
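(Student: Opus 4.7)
The identity $\sigma_s(\omega_0) = \eta_0 \cdot \omega_0$ is immediate from the supporting hyperplane property ($\eta_0$ attains $\sup_{\eta \in \bar{\mathcal E}}\eta \cdot \omega_0$), and the basic inequality $\sigma_s(\omega_0) \le \sigma_{loc}(\omega_0)$ has already been noted. So the content of~\eqref{k=loc} is the upper bound $\sigma_{loc}(\omega_0) \le \eta_0 \cdot \omega_0$, which the plan is to establish by contradiction. Assume $\sigma_{loc}(\omega_0) > \eta_0 \cdot \omega_0$ and introduce the gap function
\[
g(\omega) := \sigma_{loc}(\omega) - \eta_0 \cdot \omega, \qquad \omega \in S^{d-1}.
\]
The decisive observation is that $g > 0$ pointwise on $S^{d-1}$. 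At $\omega = \omega_0$ this is the contradiction hypothesis. For $\omega \ne \omega_0$ one has $\sigma_s(\omega) \ge \eta_0 \cdot \omega$ automatically from $\eta_0 \in \bar{\mathcal E}$; equality $\sigma_s(\omega) = \eta_0 \cdot \omega$ would exhibit $\{\eta : \eta \cdot \omega = \eta_0 \cdot \omega\}$ as a supporting hyperplane at $\eta_0$ different from the one normal to $\omega_0$, contradicting regularity of $\eta_0$. Hence $\sigma_{loc}(\omega) \ge \sigma_s(\omega) > \eta_0 \cdot \omega$ and $g(\omega) > 0$.

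Since $\sigma_{loc}$ is lower semicontinuous (Theorem~\ref{3sigmas}, part~5) and $\omega \mapsto \eta_0 \cdot \omega$ is continuous, $g$ is lower semicontinuous on the compact sphere $S^{d-1}$ and therefore attains a strictly positive infimum; fix a real $\gamma > 0$ with $g(\omega) \ge \gamma$ for every $\omega \in S^{d-1}$. I would then introduce the test function
\[
\rho(x) := \bigl((\eta_0 + \gamma\omega_0) \cdot x\bigr)_+, \qquad x \in \mathbb R^d,
\]
which is non-negative, positively homogeneous of degree one, continuous, and convex (positive part of a linear functional). The inequality $\rho \le \sigma_{loc}$ on $S^{d-1}$ is checked directly: if $(\eta_0 + \gamma\omega_0)\cdot\omega \le 0$ it is trivial, and otherwise
\[
\rho(\omega) = \eta_0\cdot\omega + \gamma\,\omega_0\cdot\omega \;\le\; \eta_0\cdot\omega + \gamma \;\le\; \eta_0\cdot\omega + g(\omega) \;=\; \sigma_{loc}(\omega),
\]
using $\omega_0 \cdot \omega \le 1$ and $\gamma \le g(\omega)$.

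With these properties in place, part~5 of Theorem~\ref{3sigmas} applied to $\rho$ gives $e^{t\rho}\phi \in L^2$ for every $t \in [0,1)$, and part~4 then yields $\rho(x) \le \sigma_s(x)$ for all $x \in \mathbb R^d$. Since $\eta_0 \cdot \omega_0 = \sigma_s(\omega_0) \ge 0$ and $\gamma > 0$, the argument of the positive part is strictly positive at $x = \omega_0$, so $\rho(\omega_0) = \eta_0 \cdot \omega_0 + \gamma$, and part~4 produces
\[
\eta_0 \cdot \omega_0 + \gamma \;=\; \rho(\omega_0) \;\le\; \sigma_s(\omega_0) \;=\; \eta_0 \cdot \omega_0,
\]
contradicting $\gamma > 0$. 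The only nontrivial ingredient is the strict inequality $\sigma_s(\omega) > \eta_0 \cdot \omega$ for $\omega \ne \omega_0$, and this is the step where the regularity hypothesis is crucial; I expect it to be the main obstacle to communicate cleanly. Once that strict pointwise gap is secured, lower semicontinuity of $\sigma_{loc}$ together with compactness of $S^{d-1}$ delivers the uniform positive $\gamma$, and parts~4 and~5 of Theorem~\ref{3sigmas} close the contradiction.
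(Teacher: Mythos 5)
Your proof is correct, and its decisive step is the same as the paper's: regularity of $\eta_0$ forces the strict inequality $\sigma_s(\omega) > \eta_0\cdot\omega$ for every $\omega \ne \omega_0$, because equality would produce a second supporting hyperplane through $\eta_0$. Where you genuinely diverge is in how the contradiction is closed. The paper argues directly: the contradiction hypothesis gives a cone $C_{\omega_0}$ around $\omega_0$ and some $t_0>1$ with $\e^{t_0\eta_0\cdot x}\phi\in L^2(C_{\omega_0})$, the strict gap gives, for each $\theta$ outside that cone, some $\eta\in\mathcal{E}$ with $\eta\cdot\theta>\eta_0\cdot\theta$ and hence a cone $C_\theta$ and $t_\theta>1$ with $\e^{t_\theta\eta_0\cdot x}\phi\in L^2(C_\theta)$, and a compactness/covering argument yields $\e^{u\eta_0\cdot x}\phi\in L^2(\R^d)$ for some $u>1$, contradicting $\eta_0\in\partial\mathcal{E}$. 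You instead convert the pointwise gap into a uniform $\gamma>0$ using lower semicontinuity of $\sigma_{loc}$ and compactness of $S^{d-1}$, then push the homogeneous convex test function $\rho(x)=\bigl((\eta_0+\gamma\omega_0)\cdot x\bigr)_+$ through parts 5) and 4) of Theorem \ref{3sigmas} to contradict $\sigma_s(\omega_0)=\eta_0\cdot\omega_0$. This is more modular: the covering argument is not redone but reused, hidden inside part 5), and the subgradient argument of part 4) replaces the direct appeal to $\eta_0\in\partial\mathcal{E}$. The price is the verification $\rho\le\sigma_{loc}$ on all of $S^{d-1}$, which you carry out correctly (the case $(\eta_0+\gamma\omega_0)\cdot\omega\le 0$ is trivial since $\sigma_{loc}\ge 0$, and directions with $\sigma_{loc}(\omega)=\infty$ cause no harm since $g$ is then $+\infty$ there). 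Both routes use the contradiction hypothesis only to secure the gap in the single direction $\omega_0$; the choice between them is a matter of taste, yours being slightly longer but leaning more heavily on the general machinery already established.
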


\begin{proof}
We have $\sigma_s(\omega_0) = \eta_0 \cdot \omega_0$.  Suppose
$\sigma_{loc}(\omega_0) > \sigma_s(\omega_0) $, then we will obtain a
contradiction.  First note that  $\e^{t_0 \eta_0 \cdot \omega_0|x|}\phi \in L^2(C_{\omega_0})$ for some open cone $C_{\omega_0}$ containing $\omega_0$ and some $t_0> 1$.  The continuity of $ \omega \mapsto \eta_0 \cdot \omega$ implies that by choosing $t_0 > 1$  smaller if necessary we can assume $\e^{t_0 \eta_0 \cdot x}\phi \in L^2(C_{\omega_0})$.

Now let  $\theta \in S^{d-1}$ with $\theta \ne \omega_0$.  By
definition of $\sigma_s(\theta)$ we have $\eta_0 \cdot \theta \le
\sigma_s(\theta)$.  If we have equality, then both of the hyperplanes with
parameters $(\omega_0, \eta_0)$ and $(\theta, \eta_0)$ are supporting at
$\eta_0$ contradicting the assumption that  $\eta_0$ is a  regular
point. Thus  $\eta_0 \cdot \theta \ < \sigma_s(\theta)$ for all $\theta \ne \omega_0$.

Given a unit vector $\theta$ in the complement of $C_{\omega_0}$, it follows that there is an $\eta \in \mathcal{E}  $ so that $\eta_0 \cdot \theta <  \eta \cdot \theta \le \sigma_s(\theta)$. Since $\e^{\eta\cdot x}\phi \in L^2$ there is an open cone $C_{\theta}$ containing $\theta$ and a $t_{\theta} > 1$ such that $\e^{t_{\theta}\eta_0\cdot x}\phi \in L^2(C_{\theta})$.   Hence by a covering argument $\e^{u\eta_0\cdot x}\phi \in L^2(\mathbb{R}^d)$ with $u > 1$  contradicting the fact that $ \eta_0 \in \partial \mathcal{E}$.

\end{proof}

\begin{corollary}\label{C1}
Suppose $\mathcal{E}$ is bounded and $\partial \mathcal{E}$ is $C^1$.  Then $\sigma_{loc}(\omega)=\sigma_s(\omega) $ for all $\omega \in S^{d-1}$.
\end{corollary}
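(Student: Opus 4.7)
The plan is to reduce the corollary directly to Theorem \ref{regularpoints} by exhibiting, for every $\omega \in S^{d-1}$, a regular point $\eta_0 \in \partial\mathcal{E}$ at which the hyperplane with parameters $(\omega,\eta_0)$ is supporting. Once such an $\eta_0$ is found, Theorem \ref{regularpoints} gives $\sigma_{loc}(\omega) = \sigma_s(\omega) = \eta_0\cdot\omega$.

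First, fix $\omega \in S^{d-1}$. Since $\mathcal{E}$ is bounded, its closure $\bar{\mathcal{E}}$ is compact, and therefore the linear functional $\eta \mapsto \eta \cdot \omega$ attains its supremum on $\bar{\mathcal{E}}$ at some point $\eta_0$. By the very definition of $\sigma_s$ (and the fact that $\sigma_s$ is determined by $\bar{\mathcal{E}}$, since a closed half-space contains a set if and only if it contains its closure), we have $\eta_0 \cdot \omega = \sigma_s(\omega)$. Moreover, $\eta_0$ cannot lie in $\mathrm{int}(\mathcal{E})$: if it did, one could increase $\eta_0\cdot\omega$ by moving a small amount in the direction $\omega$ while remaining in $\mathcal{E}$, contradicting maximality. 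Hence $\eta_0 \in \partial\mathcal{E}$. By construction $(\eta - \eta_0)\cdot\omega \le 0$ for every $\eta \in \bar{\mathcal{E}}$, so the hyperplane with parameters $(\omega,\eta_0)$ is a supporting hyperplane of $\bar{\mathcal{E}}$ at $\eta_0$.

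Next, I use the $C^1$ assumption on $\partial\mathcal{E}$: as noted in the text just before Theorem \ref{regularpoints} (citing \cite{R}, p.~246), a convex body whose boundary is $C^1$ has a unique supporting hyperplane at every boundary point, i.e.\ every point of $\partial\mathcal{E}$ is regular. In particular $\eta_0$ is a regular point of $\partial\mathcal{E}$. The hypotheses of Theorem \ref{regularpoints} are therefore satisfied for this pair $(\omega,\eta_0)$, and the theorem yields $\sigma_{loc}(\omega) = \sigma_s(\omega) = \eta_0\cdot\omega$. Since $\omega$ was arbitrary, the corollary follows.

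There is essentially no obstacle here: the boundedness of $\mathcal{E}$ is used only to guarantee existence of a maximizer of $\eta \cdot \omega$ on $\bar{\mathcal{E}}$ (equivalently, that $\sigma_s(\omega) < \infty$ and is attained on $\partial\mathcal{E}$), and the $C^1$ hypothesis on $\partial\mathcal{E}$ is used only to ensure regularity of that maximizer so that Theorem \ref{regularpoints} applies.
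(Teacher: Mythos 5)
Your proof is correct and is exactly the intended derivation: the paper leaves the corollary as an immediate consequence of Theorem \ref{regularpoints}, using compactness of $\bar{\mathcal{E}}$ to produce a maximizer $\eta_0$ of $\eta\cdot\omega$ (hence a supporting hyperplane with normal $\omega$) and the $C^1$ hypothesis, via the equivalence ``differentiable iff unique subgradient'' quoted from \cite{R} in the discussion preceding the theorem, to conclude that every boundary point, in particular $\eta_0$, is regular. Nothing is missing.
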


\begin{remark} \ 
\begin{enumerate}[1)]
\item
Notice the emphasis on the word ``some" in Theorem \ref{regularpoints}.  The point $\eta_0 \in \partial \mathcal{E}$ in that theorem may not be unique and there may be singular points and regular points which all satisfy $\eta_0 \cdot \omega_0 = \sigma_s(\omega_0)$.  It is easy to show that if $\bar{\mathcal{E}}$ is strictly convex
then $\sigma_s(\omega_0) = \eta(\omega_0)\cdot \omega_0$ for a unique $\eta(\omega_0) \in \partial \mathcal{E}$. However we will have no need to assume strict convexity.
\item 
See Section \ref{sec:example} for an operator $H=Q(p)+V(x)$ and a
corresponding eigenfunction with  a real eigenvalue 
$\lambda\not\in\Ran Q$ such that 
the     assumption of Theorem \ref{regularpoints} is  fulfilled for some values
of $\omega_0$ while for other values of $\omega_0$ the conclusion of
the theorem is false, that is $\sigma_{loc}(\omega) > \sigma_s(\omega)$ for some $\omega$.

\end{enumerate}
\end{remark}

\section{Calculating the decay rate, $H\phi = \lambda
  \phi$}\label{sec:calc-decay-rate}

In this section we assume that $\phi$ is an eigenfunction of $H$ with eigenvalue $\lambda$.  We assume that the global decay rate, $\sigma_g$, of $\phi$ is positive.  We cannot completely eliminate the possibility that for some $\omega, \sigma_c(\omega) = \infty$ (unless $d=1$), but the next result limits the size of the set where this might occur.  See Theorem \ref{rademacher1} for a very different result which under unrelated assumptions shows $\sigma_c(\omega)  < \infty$ for all $\omega \in S^{d-1}$.  

\begin{proposition}\label{nosuper}
If $d=1$ and  $\phi \ne 0, \sigma_c(\pm 1) < \infty$ as long as $V$ is bounded.  If $d \ge 2$, under the hypotheses of Theorem \ref{thm:super} the set $\{\omega \in S^{d-1}| \sigma_c(\omega) = \infty\}$ lies in a hyperplane containing $0$ unless $\phi = 0$.
  \end{proposition}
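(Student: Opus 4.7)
\emph{Case $d=1$.}  Assume $\sigma_c(+1)=\infty$ (the case $\sigma_c(-1)=\infty$ being symmetric).  Rewrite the eigenvalue equation as a first-order linear system
\begin{equation*}
  \psi'(x)=A(x)\psi(x),\qquad \psi=(\phi,\phi',\ldots,\phi^{(q-1)})^{T},
\end{equation*}
with $A(x)$ uniformly bounded in $x$ (using the ellipticity of $Q$ to normalize the leading coefficient and the boundedness of $V$).  Gronwall gives $|\psi(x_0)|\leq e^{M(x-x_0)}|\psi(x)|$ for $x\geq x_0$.  On the other hand, the hypothesis $e^{\sigma x}\phi\in L^{2}(\mathbb{R})$ for every $\sigma>0$, combined with interior elliptic regularity for the ODE and the one-dimensional Sobolev embedding $H^{q}(I)\hookrightarrow C^{q-1}(\overline{I})$, yields the pointwise bound $|\psi(x)|\leq C_\sigma e^{-\sigma x}$ as $x\to+\infty$ for every $\sigma>0$.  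Choosing $\sigma>M$ and letting $x\to\infty$ in the Gronwall inequality forces $\psi(x_0)=0$ for every $x_0$, whence $\psi\equiv 0$ by ODE uniqueness and so $\phi\equiv 0$.

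\emph{Case $d\geq 2$.}  Argue by contradiction: suppose $\phi\neq 0$ but $\Omega:=\{\omega\in S^{d-1}:\sigma_c(\omega)=\infty\}$ is not contained in any linear hyperplane through the origin.  Then $\Omega$ contains $d$ linearly independent unit vectors $\omega_1,\ldots,\omega_d$, and by the convexity of $\mathcal{E}$ (Theorem~\ref{3sigmas}(\ref{item:1a})) the recession cone $R:=\{\eta:\,t\eta\in\mathcal{E}\text{ for all }t>0\}$ contains the positive cone generated by the $\omega_i$; in particular $R$ has non-empty interior, so there exist $v_0\in\mathbb{R}^d$ and $r>0$ with $B_r(v_0)\subset R$.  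Consequently $\mathcal{E}\supset\bigcup_{t>0}B_{tr}(tv_0)$, so that $e^{\eta\cdot x}\phi\in L^2$ for $\eta$ ranging over an open unbounded cone.

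To reach a contradiction we adapt the Carleman-type argument that underlies Theorem~\ref{thm:super}.  That theorem is proved through a weighted estimate of the form $\|e^{\sigma|x|}u\|\leq C\|e^{\sigma|x|}(H-\lambda)u\|$ valid for large $\sigma$ under the stated decay of $V_1,V_2$; applied to $u=\phi$ together with the a~priori bound $e^{\sigma|x|}\phi\in L^2$, it yields $\phi=0$.  Here we replace the radial weight by the linear directional weight $\sigma v_0\cdot x$: the $L^2$-control of $e^{\eta\cdot x}\phi$ for $\eta$ ranging over the open ball $B_r(v_0)$ is precisely what permits legitimizing the corresponding weighted estimate for this directional weight, producing $\|e^{\sigma v_0\cdot x}\phi\|\leq C\|e^{\sigma v_0\cdot x}(H-\lambda)\phi\|=0$ for large $\sigma$, and hence $\phi=0$.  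The main obstacle is this adaptation step: transferring the Carleman argument of Theorem~\ref{thm:super} from a globally coercive radial weight $\sigma|x|$ to the merely directional linear weight $\sigma v_0\cdot x$, while controlling the commutator error terms produced by $V_1$ and $V_2$ under their assumed decay; the localized $L^2$-information on $e^{\eta\cdot x}\phi$ for $\eta\in B_r(v_0)$ is the key input that makes this feasible.
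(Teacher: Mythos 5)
Your $d=1$ argument is correct but genuinely different from the paper's: you run a Gronwall/ODE-uniqueness argument on the first-order system (using local elliptic regularity and Sobolev embedding to convert the weighted $L^2$ information into pointwise super-exponential decay of $\psi$), whereas the paper conjugates by $\e^{\sigma x}$ and observes that $Q(\xi+\i\sigma)-\lambda$ is invertible with $\|(Q(p+\i\sigma)-\lambda)^{-1}V\|\to 0$ as $\sigma\to\infty$ (since $Q(z)-\lambda$ has only finitely many zeros), so the fixed-point equation $\phi_\sigma=-(Q(p+\i\sigma)-\lambda)^{-1}V\phi_\sigma$ forces $\phi_\sigma=0$. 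Your route is more elementary and exploits the ODE structure; the paper's is shorter and is the one-dimensional shadow of its higher-dimensional method. Note also that your Gronwall argument only needs $V$ bounded, exactly matching the hypothesis.

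In the $d\ge 2$ case there is a genuine gap, and you have in fact pointed at it yourself. Your reduction via the recession cone of $\mathcal{E}$ is fine and is essentially the paper's first step (the paper phrases it as: convexity gives an open cone $C$ of directions with $\sigma_c=\infty$). But you then propose to run the Carleman/commutator machinery of Theorem \ref{thm:super} with the purely linear weight $\sigma v_0\cdot x$ and declare the adaptation to be "the main obstacle" without resolving it. With a linear weight this adaptation cannot succeed as stated: writing $a=p-\i\sigma\nabla f$, the positivity that drives the argument of \cite[Theorem 1.4]{HS} comes from the commutator $[a_j,a_k^*]=2\sigma\,\partial_j\partial_k f$, i.e.\ from the Hessian of the weight, and a linear $f$ has identically vanishing Hessian, so $[Q(a),Q(a^*)]=0$ and the key quadratic-form lower bound disappears. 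The paper's essential idea, absent from your proposal, is to take $f(x)=\delta r_\epsilon+\omega_0\cdot x$ with $r_\epsilon=\langle x\rangle-\langle x\rangle^{1-\epsilon}+1$: the open cone of directions with $\sigma_c=\infty$ guarantees $\e^{\sigma f}\phi\in L^2$ for all $\sigma>0$ when $\delta$ is small, while the convex correction $\delta r_\epsilon$ restores a strictly positive Hessian, $(\,[a_j,a_k^*]\,)=2\sigma\delta(\partial_j\partial_k r_\epsilon)\ge c\sigma r^{-1-\epsilon}$, which is exactly the input needed for the identity
\begin{align*}
  \inp{\phi_{\sigma}, ([Q(a),Q(a^*)] + |Q(a) + V -\lambda|^2 )\phi_{\sigma}}
  =\inp{\phi_{\sigma}, (2\Re[V_1, Q(a)] +|V_2|^2)\phi_{\sigma}}
\end{align*}
to yield $\phi_\sigma=0$ for large $\sigma$ after choosing $\epsilon$ small. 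Without supplying this convexified weight (or some substitute source of positivity), your proof of the second assertion is incomplete.
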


\begin{proof}

We use the notation $\langle x\rangle = (|x|^2 + 1)^{1/2}$.
If $d=1$ suppose $\sigma_c(1) =\infty$.  Let $\phi_{\sigma}(x) =
\e^{\sigma x}\phi(x)$.  Then $(Q(p + \i\sigma) - \lambda)\phi_{\sigma} =
-V\phi_{\sigma}$.  $Q(z)-\lambda$ has finitely many zeros so that
$\lim_{\sigma \rightarrow \infty} \|(Q(p + \i\sigma) - \lambda)^{-1}
V\| =0 $. Since for large $\sigma,  \phi_{\sigma} = -(Q(p + \i\sigma) -
\lambda)^{-1} V\phi_{\sigma}$, we obtain $\phi_{\sigma} = 0$. Suppose
$d \ge 2$ and that $\sigma_c(\omega_j) = \infty$ for a set of linearly
independent vectors $\omega_1, \cdots, \omega_d$.  Since $\mathcal{E}$
is convex, there is an open cone $C$ so that  $ \sigma_c(\omega) =
\infty$ for $\omega \in C\cap
S^{d-1}$.  Choose $\omega_0 \in C \cap S^{d-1}$.  Then for small enough $\delta > 0$, if $f(x) = \delta r + \omega_0\cdot x$, we have $\e^{\sigma f}\phi \in L^2$ for all $\sigma > 0$.  Here we take $r= r_{\epsilon}$, $r_{\epsilon} = \langle x\rangle - \langle x\rangle^{1-\epsilon} + 1$ as in \cite{HS}, because of its good convexity properties.  The parameter $\epsilon > 0$ will be taken very small at the end of the proof.
As in \cite{HS}, we let $\phi_{\sigma  } = \e^{\sigma f} \phi$ and
$a = p - \i\sigma \nabla f(x)$ and note that $(Q(a^*) + V_1 -
\lambda)\phi_{\sigma} = -V_2\phi_{\sigma}$.  Taking norms of both
sides of this equation gives
\begin{align*}
  \inp{\phi_{\sigma}, ([Q(a),Q(a^*)] + |Q(a) + V_1 -\lambda|^2 )\phi_{\sigma}}=\inp{\phi_{\sigma}, (2\Re[V_1, Q(a)] +|V_2|^2)\phi_{\sigma}}.
\end{align*}
The only properties of $a$ and $a^*$ which were used to prove Theorem 1.4 in \cite {HS} are the form of the commutator $[a_j, a^*_k]$ and the form of $[a_j, V_1]$ which are virtually the same in the present situation:  $p_{jk} = [a_j, a^*_k] = 2\sigma \delta \partial_j\partial_k r$ (and thus after a calculation $(p_{jk}) \ge c \sigma r^{-1 - \epsilon}$).  Similarly $[a_j, V_1] = -\i\partial_jV_1$ is the same as in \cite{HS}.  Thus the proof of Theorem 1.4 in \cite{HS} works exactly in the same way to give the desired result after $\epsilon$ is chosen small enough (see \cite{HS}). 
\end{proof}

Note that if $\sigma_c(\omega_0) = \infty$, then $\sigma_{loc}(\omega) = \infty$ in the open half sphere $\{\omega \in S^{d-1}| \omega \cdot \omega_0 > 0\}$.  We have not found examples of this phenomenon in the case where $\phi$ is an eigenfunction of $H = Q(p) + V(x) $ with $V(x) = o(1)$ at infinity and $\sigma_g < \infty$.  

We now embark on a program to calculate the possibilities for
$\sigma_c(\cdot)$. We assume as above that $\phi$ is an eigenfunction
of $H$ with eigenvalue $\lambda$ and   $ \sigma_g >0$.  Our first result can put some restrictions on the pairs $(\lambda, \sigma_c(\omega))$.

\begin{proposition}\label{just energy}
Suppose $\omega_0 \in S^{d-1}$ with $\sigma_c(\omega_0) < \infty$.   Suppose $V=o(1)$ at infinity.  Then for some $\xi \in \mathbb{R}^d$
\begin{equation} \label{energyeqn}
Q(\xi + \i \sigma_c(\omega_0)\omega_0) = \lambda. 
\end{equation}
  \end{proposition}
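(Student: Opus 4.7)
The plan is a proof by contradiction. Set $\sigma_c := \sigma_c(\omega_0)$ and assume $Q(\xi + \i\sigma_c \omega_0) \ne \lambda$ for every $\xi \in \R^d$; I will produce $\sigma > \sigma_c$ with $\e^{\sigma \omega_0 \cdot x}\phi \in L^2$, contradicting the definition of $\sigma_c$. The first step is uniform invertibility of the conjugated symbol: ellipticity of $Q$ gives $|Q(\xi + \i\sigma \omega_0)| \ge C|\xi|^q$ for $|\xi|$ large, uniformly for $\sigma$ bounded, while on bounded $\xi$ the nonvanishing assumption combined with joint continuity in $(\xi,\sigma)$ and compactness yields $\delta, c_0 > 0$ such that
\[
|Q(\xi + \i \sigma \omega_0) - \lambda| \ge c_0 \quad \text{for all } \xi \in \R^d,\ \sigma \in (\sigma_c - \delta, \sigma_c + \delta).
\]
Hence the Fourier multiplier $R_0(\sigma) := (Q(p + \i\sigma\omega_0) - \lambda)^{-1}$ is bounded on $L^2$ with norm $\le c_0^{-1}$ and continuous in $\sigma$ in operator norm.

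Using $V = o(1)$, I would split $V = V_R + V_\infty$ with $V_R$ of compact support and $\|V_\infty\|_\infty < c_0/2$. A Neumann series then provides a bounded, operator-norm continuous inverse $\tilde R(\sigma) := (Q(p + \i\sigma\omega_0) - \lambda + V_\infty)^{-1}$ on $(\sigma_c - \delta, \sigma_c + \delta)$. For $\sigma \in [0, \sigma_c)$, setting $\phi_\sigma := \e^{\sigma \omega_0 \cdot x}\phi \in L^2$ and conjugating the eigenvalue equation gives
\[
(Q(p + \i\sigma\omega_0) - \lambda + V_\infty)\phi_\sigma = -(V_R \phi)\,\e^{\sigma \omega_0 \cdot x}.
\]
Since $V_R\phi$ has compact support, the right-hand side is a continuous $L^2$-valued function of $\sigma \in \R$; therefore
\[
\psi_\sigma := -\tilde R(\sigma)\bigl((V_R \phi)\,\e^{\sigma \omega_0 \cdot x}\bigr)
\]
defines a continuous $L^2$-valued function on all of $(\sigma_c - \delta, \sigma_c + \delta)$ and, by uniqueness of the inverse, satisfies $\psi_\sigma = \phi_\sigma$ for $\sigma \in [0, \sigma_c)$.

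To finish, I would transfer the $L^2$-continuity of $\psi_\sigma$ to $L^2_{\rm loc}$-continuity of $\eta_\sigma := \e^{-\sigma \omega_0 \cdot x}\psi_\sigma$, using that the weight $\e^{-\sigma \omega_0 \cdot x}$ is uniformly bounded on any bounded set for $\sigma$ in a compact interval (a standard triangle-inequality splitting of $\eta_\sigma - \eta_{\sigma'}$ into a weight increment times $\psi_{\sigma'}$ plus weight times $\psi_\sigma - \psi_{\sigma'}$). Since $\eta_\sigma = \phi$ for $\sigma < \sigma_c$, this continuity forces $\eta_\sigma = \phi$ throughout $(\sigma_c - \delta, \sigma_c + \delta)$; equivalently $\psi_\sigma = \e^{\sigma \omega_0 \cdot x}\phi$ almost everywhere. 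Taking any $\sigma \in (\sigma_c, \sigma_c + \delta)$ then gives $\e^{\sigma \omega_0 \cdot x}\phi = \psi_\sigma \in L^2$, the desired contradiction. The step requiring the most care is this final $L^2_{\rm loc}$ identification, since it is the only place where the non-$L^2$ behavior of $\phi_\sigma$ at $\sigma = \sigma_c$ has to be reconciled with the continuous extension $\psi_\sigma$; the remainder of the argument is elementary operator theory once the uniform symbol bound is in hand.
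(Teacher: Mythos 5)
Your setup is sound and genuinely different from the paper's argument (the paper runs a positive-commutator/phase-space localization scheme with convexified weights $f_n=(\sigma_0-\epsilon)\omega_0\cdot x+2\epsilon r/(1+r/n)$, whereas you invert the conjugated constant-coefficient operator directly). The uniform lower bound $|Q(\xi+\i\sigma\omega_0)-\lambda|\ge c_0$ for $\sigma$ near $\sigma_c$, the Neumann-series inverse $\tilde R(\sigma)$ after splitting $V=V_R+V_\infty$, and the identity $\psi_\sigma=\phi_\sigma$ for $\sigma<\sigma_c$ are all fine (the identification can even be done distributionally: the difference $w=\phi_\sigma-\psi_\sigma\in L^2$ satisfies $\|w\|\le c_0^{-1}\|V_\infty\|_\infty\|w\|\le\|w\|/2$, so $w=0$). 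But the last step is a genuine gap: from ``$\eta_\sigma=\phi$ for $\sigma\in(\sigma_c-\delta,\sigma_c)$ and $\eta_\sigma$ is $L^2_{\loc}$-continuous'' you cannot conclude $\eta_\sigma=\phi$ for $\sigma>\sigma_c$. A continuous function that is constant on a subinterval need not remain constant beyond it. All continuity gives you is $\eta_{\sigma_c}=\phi$, i.e.\ $\e^{\sigma_c\omega_0\cdot x}\phi\in L^2$; this says the supremum defining $\sigma_c(\omega_0)$ is attained, which is \emph{not} a contradiction with the definition of $\sigma_c$ as a supremum. So as written the proof proves nothing for $\sigma>\sigma_c$ and the contradiction never materializes.

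The standard repair is to replace continuity by analyticity. Extend $\sigma$ to complex $z$ in the strip $|\Re z-\sigma_c|<\delta$: since $Q(\xi+\i z\omega_0)=Q\bigl((\xi-(\Im z)\omega_0)+\i(\Re z)\omega_0\bigr)$, the imaginary part of $z$ merely translates $\xi$, so the lower bound $c_0$ persists on the whole strip and $z\mapsto\tilde R(z)$ is operator-norm analytic there; $z\mapsto (V_R\phi)\e^{z\omega_0\cdot x}$ is an entire $L^2$-valued map because $V_R\phi$ has compact support. Hence for each compactly supported $\chi$ the function $z\mapsto\int\chi(x)\e^{-z\omega_0\cdot x}\psi_z(x)\,\d x$ is analytic on the strip and equals $\int\chi\phi\,\d x$ on the real segment $(\sigma_c-\delta,\sigma_c)$; the identity theorem then gives $\eta_\sigma=\phi$ for all real $\sigma$ in $(\sigma_c-\delta,\sigma_c+\delta)$, and taking $\sigma>\sigma_c$ yields the contradiction. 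With this fix your argument becomes a legitimate, more elementary alternative to the paper's proof for this particular proposition (it exploits that only the energy equation, not the gradient condition, is at stake here); you should also record the minor point that $\phi_\sigma$ lies in the relevant operator domain, or use the distributional uniqueness argument above.
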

\begin{proof} Abbreviate $ \sigma_c(\omega_0) = \sigma_0$  and use $r = \langle x \rangle$. For  $\epsilon>0$ 
we consider
\begin{align*}
  f(x)=(\sigma_0-\epsilon)\omega_0 \cdot x  + 2
\epsilon  r\mand f_n = (\sigma_0-\epsilon)\omega_0 \cdot x  + 2
\epsilon r/(1+r/n),\,n\in\N,
\end{align*}
 and $\phi_n = \e^{f_n} \phi$.  We will
show that unless (\ref{energyeqn}) is satisfied for some $\xi$,
$\|\phi_n\| \le C$ with a constant $C$ independent of $n$ provided
$\epsilon>0$ is chosen small enough. Taking $n\to \infty$ yields
$\e^{f } \phi \in L^2$ which is contradiction since $f(x)\geq
(\sigma_0+\epsilon)\omega_0 \cdot x$.

  We introduce the notation of \cite{HS}
  \begin{align*}
    X = \text
{Re}(Q(\xi + \i\nabla f_n(x)) - \lambda)\mand  Y =  \text {Im}Q(\xi + \i
\nabla f_n(x)).
  \end{align*}
  Suppose (\ref{energyeqn}) does not have a
solution. Then by a continuity and compactness 
 argument and the fact that $|\nabla f_n(x) - \sigma_0 \omega_0| \leq
 3\epsilon$,  we obtain
 \begin{align*}
   X^2 + Y^2 = |Q(\xi
+ \i \nabla f_n(x)) - \lambda |^2 \ge 2 \kappa\text{ for some small  }\kappa
>0.
 \end{align*}
  Obviously here we  needed $\epsilon>0$ small.  

Next we use the localization symbols $\chi_-=\chi(X^2+Y^2\leq
\kappa)$ and $\chi_+=\chi(X^2+Y^2\geq
\kappa)$  of \cite{HS} as well as their quantizations $\tilde
\chi_\mp$, respectively.  Here $\chi(t\le \kappa) =
  \chi_1(t/\kappa),\  \chi(t\ge \kappa) = \chi_2(t/\kappa)$, where 
  $\chi_1,\chi_2$ denote smooth non-negative 
functions with $\chi_1(t) = 1$ for $t \le 1$, $\chi_2(t) = 1$ for $t\ge 2$, and $\chi_1^2 + \chi_2^2 = 1$.  By construction $\tilde \chi_-=0$, and whence
by 
\cite[(4.9)]{HS} we  have  $I\leq \tilde \chi^2_++ C/r^2$. Using the estimate $\|\tilde {\chi}_+\phi_n \|^2 \le C(\|V\phi_n\|^2 + \|r^{-1/2}\phi_n\|^2)$ from Lemma 4.3 of \cite {HS} we obtain 
\begin{align*}
   \|\phi_n\|^2 \le C(\|V\phi_n\|^2  + \|r^{-1/2}\phi_n\|^2), 
\end{align*}
which easily leads to $\|\phi_n\| \le C$ as desired.  Here we mention that
although \cite [Lemma 4.3]{HS} is stated only for
$f_n (x) = r (\sigma + \gamma/(1+r/n))$, for certain values of
$\sigma$ and $\gamma$, the proof given there works with minor
modifications for our $f_n$.
  \end{proof}

\begin{remark}

According to Propositions \ref{nosuper} and \ref{just energy} if $d=1, \sigma_g >0$, and $V(x) = o(1)$ at infinity, then the possible decay rates $ \sigma = \sigma_c(\pm1)$ can be calculated from the equation $Q (\xi + \i \sigma) = \lambda$. Note that the reality condition shows that the totality of decay rates calculated from $Q(\xi + \i \sigma) = \lambda$ at $+ \infty$ is the same as that at $-\infty$.  In fact it is easy to see that if $\sigma_1 $ and $\sigma_2 $ are two positive solutions to this equation then there is a (complex) smooth compactly supported $V$ and a smooth nonzero $\phi$ with decay rate $\sigma_1$ at $+ \infty$ and decay rate $\sigma_2$ at $- \infty$ such that $(Q(p) + V - \lambda)\phi = 0$.
\end{remark}

In the following we  assume $d \geq 2 $.

Our main result is the following theorem:

\begin{thm}\label{rademacher1} 
Suppose $(H - \lambda)\phi = 0, 0 < \sigma_g < \infty$ and $V$ satisfies
\begin{align}
&\forall \alpha: \ \partial^{\alpha}V_1(x) = o(|x|^{-|\alpha|}),  \label{assump1}\\ 
&V_2(x) = o(|x|^{-1/2}).  \label{assump2}
\end{align} 

For   $\omega_0\in
S^{d-1}$ with  $\sigma_0: =
\sigma_c(\omega_0) < \infty$ let $\eta_0=\sigma_0\omega_0$ and $\hat C_{0} = \{\hat x
\in S^{d-1} | \sigma_s(\hat x) = \eta_0 \cdot \hat
x\}$.  
  For any such  $\omega_0$ there exists  $(\xi, \theta, \beta) \in \mathbb{R}^d
\times \hat C_{0} \times \mathbb{C}$ solving the pair of equations
\begin{subequations}
\begin{align}\label{maineq1}
    Q(\xi+\i \eta_0)&=\lambda,\\
    \nabla Q(\xi +\i \eta_0)&= \beta \theta.\label{maineq2}
  \end{align}
  \end{subequations}

If the set of $\eta_0$'s  which occur in the set of all
solutions  $(\xi, \theta, \beta, \eta_0) \in
\mathbb{R}^d\times S^{d-1}\times \mathbb{C}\times \mathbb{R}^d$   to
the pair of equations 
(\ref{maineq1})  and  (\ref{maineq2}) is bounded, then $\sigma_c(\omega) < \infty$ for all $\omega \in S^{d-1}$.
\end{thm}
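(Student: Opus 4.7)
Fix $\omega_0 \in S^{d-1}$ with $\sigma_0 := \sigma_c(\omega_0) < \infty$ and set $\eta_0 = \sigma_0 \omega_0 \in \partial\vE$. I argue by contradiction: suppose no triple $(\xi,\theta,\beta) \in \R^d \times \hat C_0 \times \C$ satisfies both \eqref{maineq1} and \eqref{maineq2}. The plan is to produce a direction $\theta \in \hat C_0$, an $\epsilon > 0$, and an Agmon-type weight $f_n$ such that $\e^{f_n}\phi$ is uniformly $L^2$-bounded in $n$ and passes to the limit $\e^{(\eta_0 + \epsilon \theta)\cdot x}\phi \in L^2$. Since $\eta_0 + \epsilon \theta \notin \bar\vE$ by the supporting-hyperplane characterization of $\hat C_0$, this will contradict $\eta_0 \in \partial\vE$. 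The scheme generalizes Proposition~\ref{just energy} and the proof of Theorem~\ref{thm:global decay rate determined} in \cite{HS}: instead of varying only a scalar $\sigma$ along a single ray, one perturbs the full vector $\eta$ in a transverse direction $\theta$ lying in the normal cone of $\bar\vE$ at $\eta_0$.

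\textbf{Carleman apparatus.} For $\theta \in \hat C_0$ and parameters $0 < \delta \ll \epsilon$, take
\begin{align*}
f_n(x) = (\eta_0 + \epsilon \theta)\cdot x - \delta\, r_\epsilon/(1 + r_\epsilon/n), \qquad r_\epsilon = \langle x\rangle - \langle x\rangle^{1-\epsilon} + 1,
\end{align*}
and set $\phi_n = \e^{f_n}\phi$ and $a = p - \i\nabla f_n$. The Carleman-type identity
\begin{align*}
\inp{\phi_n, \bigl([Q(a), Q(a)^*] + |Q(a) + V - \lambda|^2\bigr)\phi_n} = \inp{\phi_n, \bigl(2\Re[V_1, Q(a)] + |V_2|^2\bigr)\phi_n},
\end{align*}
combined with the localization symbols $\tilde\chi_\pm$ of \cite[Section~4]{HS}, Lemma~4.3 there, and the hypotheses \eqref{assump1}--\eqref{assump2} on $V$, yields $\|\phi_n\|\le C$ uniformly in $n$ provided $|Q(\xi + \i\nabla f_n(x)) - \lambda|$ is bounded below by a positive constant, uniformly in $\xi \in \R^d$, in large $x$, and in $n$. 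Such a bound immediately produces $\e^{(\eta_0 + \epsilon\theta)\cdot x}\phi \in L^2$ in the half-space $\{\omega_0\cdot x \geq 0\}$ (using $\delta < \epsilon$), contradicting $\eta_0 + \epsilon\theta \notin \bar\vE$.

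\textbf{Forcing both algebraic relations.} The uniform lower bound on $|Q(\xi+\i\nabla f_n) - \lambda|$ can fail only when there are sequences $(\xi^{(k)}, x^{(k)})$ along which this quantity vanishes. A compactness argument mirroring the one in Proposition~\ref{just energy} forces \eqref{maineq1} at $\eta_0$ in the limit $\epsilon, \delta \to 0$. To extract \eqref{maineq2}, one expands
\begin{align*}
Q(\xi + \i(\eta_0 + \epsilon \theta)) - \lambda = \i\epsilon\, \theta \cdot \nabla Q(\xi + \i \eta_0) + O(\epsilon^2)
\end{align*}
along perturbations $\xi = \xi^{(0)} + O(\epsilon)$ of a root $\xi^{(0)}$ of \eqref{maineq1}. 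The Carleman bound degenerates to leading order in $\epsilon$ only if $\theta \cdot \nabla Q(\xi^{(0)} + \i\eta_0)$ vanishes in an appropriate complex-linear sense for every $\theta \in \hat C_0$. Exploiting the convex geometry of $\bar\vE$ at $\eta_0$ and optimizing $\theta$ over the whole normal cone $\hat C_0$ collapses this to the rank-one relation $\nabla Q(\xi + \i\eta_0) = \beta\theta$ for some $\theta \in \hat C_0$ and $\beta \in \C$. The main technical obstacle is ensuring that the selected $\theta$ actually lies in $\hat C_0$ (rather than being an arbitrary element of $S^{d-1}$); this requires running the weight construction uniformly as $\theta$ varies over the normal cone.

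\textbf{Bounded $\eta_0$-set forces $\sigma_c < \infty$ everywhere.} Suppose instead $\sigma_c(\omega_*) = \infty$ for some $\omega_* \in S^{d-1}$. The set $U = \{\omega \in S^{d-1} : \sigma_c(\omega) < \infty\}$ is open by part~2 of Theorem~\ref{3sigmas}, and nonempty because $\sigma_g < \infty$ while $\sigma_{\loc}$ attains its infimum $\sigma_g$ on the compact sphere by lower semi-continuity (Theorem~\ref{3sigmas}.5). Since $U \neq S^{d-1}$ and $S^{d-1}$ is connected, the topological boundary $\partial U$ (in $S^{d-1}$) is nonempty and contained in $S^{d-1}\setminus U$; pick $\omega_0' \in \partial U$ and a sequence $\omega_k \to \omega_0'$ with $\omega_k \in U$. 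The Lipschitz estimate of part~2 of Theorem~\ref{3sigmas} yields
\begin{align*}
\frac{1}{\sigma_c(\omega_k)} \leq \frac{|\omega_k - \omega_0'|}{\sigma_g} \xrightarrow[k\to\infty]{} 0,
\end{align*}
so $|\eta_0^{(k)}| = \sigma_c(\omega_k) \to \infty$, where $\eta_0^{(k)} = \sigma_c(\omega_k)\omega_k$ is an $\eta_0$-component of a solution of \eqref{maineq1}--\eqref{maineq2} by the first part of the theorem. This contradicts the assumed boundedness of the $\eta_0$-set.
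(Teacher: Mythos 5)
The second half of your argument (a bounded set of $\eta_0$'s forces $\sigma_c<\infty$ everywhere) is correct and is essentially the paper's one-line appeal to the Lipschitz continuity of $1/\sigma_c$ and the connectedness of $S^{d-1}$. The main existence claim, however, has genuine gaps. The central one is the selection of $\theta\in\hat C_0$, which you yourself flag as ``the main technical obstacle'' but do not resolve: at a singular boundary point the normal cone contains a continuum of directions, and ``optimizing $\theta$ over the whole normal cone'' is not an argument. The paper's resolution is the reason for the theorem's label: replace $L^2$ by the weighted spaces $L^2_m=\e^{-r/m}L^2$, note that $\partial\vE^m$ is Lipschitz so that by Rademacher's theorem its regular points are dense, pick regular points $\eta^m\to\eta_0$ where the supporting direction $\theta^m_0$ is \emph{unique}, prove the statement there (Proposition \ref{mainthm}), and pass to the limit $m\to\infty$ using the ellipticity of $Q$ for compactness of the $\xi$'s and a separate limiting argument to check $\sigma_s(\theta)=\eta_0\cdot\theta$. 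Without some such device your scheme cannot produce a $\theta$ that provably lies in $\hat C_0$.

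The mechanism forcing \eqref{maineq2} is also not the one you describe. A uniform lower bound on $|Q(\xi+\i\nabla f_n)-\lambda|$ is unavailable, since the energy equation \eqref{maineq1} always has solutions by Proposition \ref{just energy}; so the contradiction cannot come from energy localization alone, and your first-order expansion $\i\epsilon\,\theta\cdot\nabla Q+O(\epsilon^2)$ never becomes an estimate. In the paper the gradient condition enters through a positive commutator: on the characteristic set the symbol $b_n=r\{X,Y\}=\sum_{i,j}r\,\overline{\partial_iQ}\,(\partial_j\partial_if_n)\,\partial_jQ$ is bounded below by (essentially) $\delta_1/m-2\epsilon\Delta_1$, where $\delta_1=\min|P_{\perp}(\theta_0)\nabla Q|^2$ over solutions of the energy equation; the convex term $r/m$ in the weight contributes positivity precisely in directions orthogonal to $x$, and the whole argument is a proof by contradiction that $\delta_1=0$, i.e.\ $\nabla Q\parallel\theta_0$. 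Two further defects: your regularizer has the wrong sign --- with $f_n=(\eta_0+\epsilon\theta)\cdot x-\delta r_\epsilon/(1+r_\epsilon/n)$ the subtracted term is bounded for fixed $n$, so $\e^{f_n}\phi\in L^2$ is equivalent to the conclusion you are trying to reach and cannot serve as the a priori starting point of a uniform bound; and $L^2$-integrability of $\e^{(\eta_0+\epsilon\theta)\cdot x}\phi$ on a half-space does not contradict $\eta_0+\epsilon\theta\notin\bar{\vE}$, which is a global statement --- the paper obtains the bound on the complement of a cone about $\theta_0$ from the strictly positive minimum of $\sigma_s(\hat x)-\eta_0\cdot\hat x$ there, a step absent from your outline.
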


\begin{remarks}
\begin{enumerate}[1)]
\item
There may be spurious solutions to the system of equations
(\ref{maineq1}) and (\ref{maineq2}) which do not describe the
exponential decay of an eigenfunction.  This may happen for the finite
set of exceptional eigenvalues $\lambda$ which arises in rotationally
invariant $Q$ (see Theorem \ref{rotationinv} below) and it happens for the example in Section \ref{sec:example}.  Both of these problems can be (at least partially) traced to the fact that the spectral parameter $\lambda$ is a critical value of $Q$. It is well known that the set of critical values of $Q : \mathbb{C}^d \rightarrow \mathbb{C}$ is finite.  In fact the number of these critical values can be bounded by $(q-1)^d$ (see \cite{BR}). 
\item
 Assume $\lambda \in \mathbb{R}$ is not such a critical value.  Let us choose $\theta \in S^{d-1}$ and assume that there is a solution to the system of equations (\ref{maineq1}) and (\ref{maineq2}).  We are interested in the set of $\eta = \text{Im} z$  such that $\eta \cdot \theta$ is stationary with respect to variations of $z = \xi + \i \eta \in M = \{z| Q(z) = \lambda \}$.  The vectors $\nabla_{(\xi,\eta)}(\text{Re}Q)(\xi, \eta)$ and $\nabla_{(\xi,\eta)} (\text{Im}Q)(\xi, \eta)$  are linearly independent by the Cauchy-Riemann equations.  Introducing the Lagrange multipliers $\gamma_1$ and $\gamma_2$ and  setting the derivatives of $\eta \cdot \theta + \gamma_1\text{Re}Q(z) + \gamma_2\text{Im}Q(z)$ with respect to $\xi$ and $\eta$ equal to zero we find that in fact $\eta \cdot \theta$ is indeed stationary at a point $z = \xi + \i \eta$ which solves  (\ref{maineq1}) and (\ref{maineq2}) .  Given the existence of the set $\mathcal{E}$, the meaning of $\theta$ is that of a unit vector perpendicular to a supporting hyperplane to $\bar{\mathcal{E}}$ at the point $\eta \in \partial \mathcal{E}$.  Thus for $\eta' \in \bar{ \mathcal{E}}$, $\eta' \cdot \theta$ has a global maximum or minimum at the point $\eta' = \eta$.
 \item
 Consider the set $\mathcal{E}$ corresponding to an eigenfunction
 $\phi$ satisfying the assumptions of Theorem \ref{rademacher1}.  For
 each point $\eta$ in the boundary of $\mathcal{E}$ there must be a
 corresponding solution to  (\ref{maineq1}) and (\ref{maineq2}).  We
 must be able to put together a function $\eta = \sigma(\omega) \omega
 \ (\omega = \eta/|\eta|, \sigma(\omega) = |\eta|)$ from the
 (multiplicity of) solutions to (\ref{maineq1}) and (\ref{maineq2})
 which satisfies the requirements coming from the convexity of
 $\mathcal{E}$ and the (related) Lipschitz continuity of
 $1/\sigma(\omega)$. If there is no such function then there is no
 such eigenfunction (see Remark 1.6 (4)) in \cite{HS}).  And clearly
 if the only such functions $\sigma(\omega)$ are bounded then
 $\sigma_c(\omega) $ corresponding to $\phi$ must be bounded.  This
 generalizes a statement in Theorem \ref{rademacher1}.
\item Clearly Theorems \ref{thm:global decay rate determined} and
  \ref{rademacher1} have a similar nature. Their proofs are also
  similar (partly explaining  why the conditions on $V$ are the same)  although there are additional ideas necessary in the present
  paper. As noted in \cite{HS} the proof of Theorem \ref{thm:global
    decay rate determined} is rather robust and applies with
  modifications to 
  certain elliptic  variable coefficient differential operators and
  even certain pseudodifferential 
 operators
  with elliptic  symbol being uniformly real-analytic in the
  $\xi$-variable assuming $\sigma_g$ for the  given eigenfunction is smaller than the uniform
  analyticity radius, say denoted $\sigma_a$. The same can be said for Theorem
  \ref{rademacher1}  under the stronger condition
  $\sigma_c(\omega_0)<\sigma_a$ on  the
  eigenfunction. For example our proof works
  for  the symbol $(|\xi|^2+s^2)^{1/2}+V(x)$ assuming $0<
  \sigma_g\leq \sigma_c(\omega_0)<s$.
\end{enumerate} 
\end{remarks}

We defer the proof of Theorem \ref{rademacher1}.  We have the
following corollary for rotationally  invariant $Q$.

\begin{thm}\label{rotationinv}
Suppose $(H - \lambda)\phi = 0$, $V$ is as in Theorem
\ref{rademacher1}, and $0 < \sigma_g < \infty$. Suppose $Q$ is
rotation invariant.  Define the polynomial $G$ of degree $q/2$ so that
$G(\xi^2) = Q(\xi)$.   We assume all the zeros of $G - \lambda$ have
multiplicity one.  (There are at most $\frac{q}{2} -1$ values of
$\lambda$ for which this is not the case.) Then there are at most
$q/2$ positive numbers $\sigma_0$ (being independent of $\omega_0=\eta_0/|\eta_0|$)
for which there is a solution to the pair of equations
(\ref{maineq1}) and  (\ref{maineq2}) with $|\eta_0|=\sigma_0$, and $\sigma_g$ is one of them.  In addition, $\sigma_{loc}(\omega) = \sigma_g$ for all $\omega \in S^{d-1}$.
\end{thm}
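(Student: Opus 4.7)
The plan is to use rotation invariance to reduce the system (\ref{maineq1})--(\ref{maineq2}) to an essentially scalar problem in one complex variable, then combine the second part of Theorem~\ref{rademacher1} with the Lipschitz continuity from Theorem~\ref{3sigmas} to show that $\sigma_c$ is constant.

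\emph{Step 1 (rotation-invariant reduction).} Writing $z\cdot z=\sum_j z_j^2$ for $z\in\mathbb{C}^d$, we have $Q(z)=G(z\cdot z)$ and $\nabla Q(z)=2G'(z\cdot z)\,z$. Let $(\xi,\theta,\beta,\eta_0)$ solve (\ref{maineq1})--(\ref{maineq2}) with $\sigma_0:=|\eta_0|>0$, and set $z=\xi+\i\eta_0$, $w=z\cdot z=|\xi|^2-\sigma_0^2+2\i\,\xi\cdot\eta_0$. Equation (\ref{maineq1}) reads $G(w)=\lambda$, so $w$ is one of the at most $q/2$ roots of $G-\lambda$; the simplicity hypothesis forces $G'(w)\neq 0$. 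Then (\ref{maineq2}) becomes $z=c\theta$ with $c=\beta/(2G'(w))$, and writing $c=a+\i b$ we obtain $\xi=a\theta$, $\eta_0=b\theta$, $w=c^2=a^2-b^2+2\i ab$, and $\omega_0=\sign(b)\theta$.

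\emph{Step 2 (counting $\sigma_0$).} From $w=c^2$ we get $|w|=a^2+b^2$, hence
\begin{equation*}
\sigma_0^2=b^2=\tfrac{1}{2}(|w|-\Re w),
\end{equation*}
which depends only on $w$. Thus the at most $q/2$ roots of $G-\lambda$ produce at most $q/2$ positive values of $\sigma_0$, each independent of $\omega_0$. That $\sigma_g$ lies in this list is immediate from Theorem~\ref{thm:global decay rate determined}, whose conclusion is exactly (\ref{maineq1})--(\ref{maineq2}) with $\eta_0=\sigma_g\omega$ and $\theta=\omega$.

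\emph{Step 3 (finite range forces constancy).} The set of all $\eta_0$ occurring in solutions of (\ref{maineq1})--(\ref{maineq2}) is contained in a finite union of spheres (one for each root of $G-\lambda$), hence is bounded. The second assertion of Theorem~\ref{rademacher1} then gives $\sigma_c(\omega)<\infty$ for every $\omega\in S^{d-1}$. Applying the first assertion of Theorem~\ref{rademacher1} to each $\omega$, the value $\sigma_c(\omega)$ must lie in the finite list of Step~2. Since $1/\sigma_c$ is Lipschitz by Theorem~\ref{3sigmas} and $S^{d-1}$ is connected for $d\geq 2$, $\sigma_c$ is constant. As $\sigma_g\leq\sigma_c(\omega)\leq\sigma_{loc}(\omega)$ pointwise and $\sigma_g=\inf_\omega\sigma_{loc}(\omega)$, this constant must equal $\sigma_g$.

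\emph{Step 4 (identifying $\sigma_{loc}$).} Consequently $\mathcal{E}=\overline{B_{\sigma_g}(0)}$, which is bounded with smooth $C^1$ boundary; Corollary~\ref{C1} yields $\sigma_{loc}(\omega)=\sigma_s(\omega)$. Since $\sigma_s$ is the support function of this centered ball, $\sigma_s(\omega)=\sigma_g$, which proves the final claim.

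The main technical point is Step~1: we must invoke the simple-zeros hypothesis to exclude $G'(w)=0$ and thereby conclude that $z$ is a complex scalar multiple of the real unit vector $\theta$. The remaining steps combine the results already established with standard connectedness and convexity arguments.
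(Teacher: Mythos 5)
Your proposal is correct and follows essentially the same route as the paper: the rotation-invariant reduction $Q(z)=G(z\cdot z)$, $\nabla Q(z)=2G'(z\cdot z)z$ together with the simple-zero hypothesis to force $z$ parallel to $\theta$, the resulting finite list of possible $\sigma_0$'s containing $\sigma_g$ by Theorem \ref{thm:global decay rate determined}, constancy of $\sigma_c$ via the second part of Theorem \ref{rademacher1} plus the Lipschitz continuity of $1/\sigma_c$ and connectedness of $S^{d-1}$, and finally Corollary \ref{C1}. The only (harmless) cosmetic point is that one should write $B_{\sigma_g}(0)\subset\mathcal{E}\subset\bar B_{\sigma_g}(0)$ rather than asserting equality with the closed ball, exactly as the paper does.
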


\begin{proof}

From the rotation invariance,  (\ref{maineq1}) and  (\ref{maineq2}) reduce to $G(z\cdot z) = \lambda$ and $2G'(z\cdot z) z = \beta \theta, z = \xi + \i \sigma_0 \omega_0$.  Our assumptions imply $ z = \beta' \theta$ for some $\beta' \in \mathbb{C}$ and thus
we have $z = (\alpha + \i\sigma_0) \omega_0$ with $\alpha \in
\mathbb{R}$.  It follows that the set of $\sigma_0$'s which may occur is bounded.  In fact the set of such positive $\sigma_0$'s
consists of at most $q/2$ constants independent of $\omega_0$ and
according to Theorem \ref{thm:global decay rate determined} 
$\sigma_g$ is one of them.  From the continuity of $1/\sigma_c(\omega)$,
see Theorem \ref{3sigmas}, and the fact that $S^{d-1}$ is connected it
follows that $\sigma_c(\omega)=\sigma$ for some $\sigma\in (0,\infty)$
independent of $\omega$. Whence $B_\sigma(0)\subset\vE \subset \bar
B_\sigma(0)$, which in turn by Corollary 
\ref{C1} implies that $\sigma=\sigma_c(\omega) =\sigma_s(\omega) = \sigma_{loc}(\omega)$ and therefore that $\sigma_{loc}(\omega)=
\sigma_g$.


\end{proof}

The main work of this section is in the next proposition which needs
modified constructions defined as follows in terms of a large
parameter $m$: 

For a given $\phi\in L^2$ with $0 < \sigma_g < \infty$
and a given integer $m>1/\sigma_g$ we replace the quantities $\vE, \sigma_c$ and
$\sigma_s$ of Section \ref{sec:direct-decay-rates} by $\vE^m,
\sigma^m_c$ and $\sigma^m_s$, respectively, given by replacing $ L^2$
by $L_m^2(\R^d)=\e^{-r/m}L^2(\R^d)$ in the definitions in Section
\ref{sec:direct-decay-rates}. Here and henceforth $r = \langle x
\rangle$. Alternatively, this amounts to the old
  quantities with  $\phi$ replaced by $\e^{r/m}\phi$. Whence by   Theorem \ref{3sigmas} we obtain that $\vE^m$ is
convex containing some ball, $1/\sigma^m_c$ is Lipschitz and $\partial
\mathcal{E}^m = \{\sigma^m_c(\omega)\omega | \omega \in S^{d-1},
\sigma^m_c(\omega) < \infty\}$.  Moreover for any $\omega_0\in S^{d-1}$ with
$\sigma_0:=\sigma_c(\omega_0)<\infty$   we can bound 
{\begin{align}\label{eq:apprM}
  \tfrac 1m\leq \sigma_0-\sigma^m_c(\omega_0)\leq \tfrac 1m
  \tfrac{\sigma_0}{\sigma_g},
\end{align} which by Rademacher's theorem allows us to find
 a sequence
$\eta^m=\sigma^m_c(\omega^m)\omega^m$ of regular points in  $\partial
\mathcal{E}^m$ with $\eta^m\to \eta_0:=\sigma_0\omega_0$ for
$m\to \infty$. To obtain the second inequality in (\ref{eq:apprM}) note that if $0 < \sigma < \sigma_0(1-(m\sigma_g)^{-1})$ then if $\omega_0\cdot x/|x| \ge \sigma_g/\sigma_0$ we have $\sigma \omega_0 \cdot x/|x| + 1/m < \sigma_0 \omega_0 \cdot x/|x|$ while if $\omega_0 \cdot x/|x| < \sigma_g/\sigma_0$ then $\sigma\omega_0 \cdot x/|x| + 1/m < \sigma_g$.

\begin{proposition}\label{mainthm} Suppose $(H - \lambda)\phi = 0, 0 <
  \sigma_g < \infty$, $m>1/\sigma_g$, $\omega^m_0\in S^{d-1}$ with $\sigma^m_0: =
  \sigma^m_c(\omega^m_0) < \infty$  and that $V$ is as in Theorem
  \ref{rademacher1}.  Let $\eta^m_0=\sigma^m_0\omega^m_0$ and $\hat C^m_{0} = \{\hat x \in S^{d-1} |
  \sigma^m_s(\hat x) = \eta^m_0 \cdot \hat x\}$.  Suppose $\eta^m_0$ is a regular
  point of $\partial\vE^m$  so that $\hat C^m_{0}$
  consists of only one point, say $\theta^m_0$. Then there exists $(\xi,\beta) \in \mathbb{R}^d
\times \mathbb{C}$ solving the pair of equations
 \begin{subequations}
  \begin{align}\label{maineq:1} Q(\xi+\i (\eta^m_0 +\theta^m_0/m))&=\lambda,\\
\nabla Q(\xi +\i (\eta^m_0 +\theta^m_0/m))&= \beta \theta^m_0.\label{maineq:2}
  \end{align}
  \end{subequations}
   \end{proposition}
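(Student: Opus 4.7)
The plan is to argue by contradiction, following the commutator template of Proposition \ref{just energy} and the proof of Theorem \ref{thm:global decay rate determined} in \cite{HS}. Suppose no $(\xi, \beta) \in \mathbb{R}^d \times \mathbb{C}$ solves the pair (\ref{maineq:1})--(\ref{maineq:2}). I will construct a family of weights $f_n \nearrow f$ such that $\|e^{f_n}\phi\|_{L^2}$ is bounded uniformly in $n$, while the limit $f$ grows in the direction $\theta^m_0$ at a rate strictly exceeding the supporting-hyperplane bound $\eta^m_0\cdot\theta^m_0 + 1/m$. Monotone convergence then forces $e^f\phi \in L^2$, contradicting the characterization of $\theta^m_0$ as the unique outer normal at $\eta^m_0 \in \partial\mathcal{E}^m$.

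Concretely I propose
\[
f_n(x) = s\eta^m_0\cdot x + (1/m + \epsilon)\tilde r_n(x),\quad \tilde r_n(x) = \langle x\rangle/(1+\langle x\rangle/n),
\]
with $s \in (0,1)$ close to $1$ and $\epsilon > (1-s)\eta^m_0\cdot\theta^m_0$ chosen small (for example $s = 1-\delta$, $\epsilon = 2\delta\,\eta^m_0\cdot\theta^m_0$). Since $s\eta^m_0 \in \operatorname{int}(\mathcal{E}^m)$ by convexity and $\tilde r_n \leq n$, $e^{f_n}\phi \in L^2$ a priori for each $n$. The limit $f(x) = s\eta^m_0\cdot x + (1/m + \epsilon)\langle x\rangle$ obeys the pointwise inequality
\[
f(x) - \langle x\rangle/m - (s\eta^m_0 + \epsilon\theta^m_0)\cdot x = \epsilon(\langle x\rangle - \theta^m_0\cdot x) \geq 0,
\]
so once $\|\phi_n\|_{L^2}$ is bounded, Fatou's lemma yields $e^f\phi \in L^2$ and hence $s\eta^m_0 + \epsilon\theta^m_0 \in \mathcal{E}^m$, giving $(s\eta^m_0 + \epsilon\theta^m_0)\cdot\theta^m_0 = s\eta^m_0\cdot\theta^m_0 + \epsilon > \eta^m_0\cdot\theta^m_0 = \sigma^m_s(\theta^m_0)$, the desired contradiction.

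For the uniform bound I would adapt the Agmon-type estimate from \cite{HS}. With $\phi_n = e^{f_n}\phi$ and $a = p - \i\nabla f_n$, the conjugated equation $(Q(a^*) + V - \lambda)\phi_n = 0$ together with taking the $L^2$-norm squared yields the standard identity
\[
\langle\phi_n, (|Q(a) + V_1 - \lambda|^2 + [Q(a), Q(a^*)])\phi_n\rangle = \langle\phi_n, (2\Re[V_1, Q(a)] + |V_2|^2)\phi_n\rangle.
\]
The commutator $[a_j, a_k^*] = 2(1/m + \epsilon)\partial_j\partial_k\tilde r_n$ is non-negative by convexity of $\tilde r_n$ and contributes positively of order $r^{-1-\delta'}$ after the passage from principal symbol to operator as in \cite{HS}. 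Introducing the phase-space cutoffs $\chi_\pm = \chi(X^2 + Y^2 \gtrless \kappa)$ with $X = \Re(Q(\xi + \i\nabla f_n(x)) - \lambda)$ and $Y = \Im Q(\xi + \i\nabla f_n(x))$, and applying the appropriate variant of \cite[Lemma 4.3]{HS}, one reaches $\|\phi_n\|^2 \leq C(\|V\phi_n\|^2 + \|r^{-1/2}\phi_n\|^2)$, whence $\|\phi_n\|_{L^2} \leq C$ uniformly in $n$ by the decay hypotheses on $V$.

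The main obstacle is showing that the $\chi_-$-localization is truly negligible at infinity. At large $|x|$ in direction $\omega = x/|x|$, one has $\nabla f_n(x) \to s\eta^m_0 + (1/m + \epsilon)\omega$; the contradiction hypothesis together with continuity and compactness precludes real $\xi$ solutions of $Q(\xi + \i(s\eta^m_0 + (1/m + \epsilon)\omega)) = \lambda$ in an $S^{d-1}$-neighborhood of $\omega = \theta^m_0$ once $(s, \epsilon)$ is close to $(1, 0)$, but in other directions real solutions may persist. Here the full pair (\ref{maineq:1})--(\ref{maineq:2}) must be exploited: its unsolvability implies that at any surviving real solution $\nabla Q$ cannot be parallel to $\omega$, giving the transversality---analogous to the stationarity argument producing the gradient condition in the proof of Theorem \ref{thm:global decay rate determined}---that reduces the $\chi_-$-contribution to a lower-order error absorbable by the positive commutator coming from the convexity of $\tilde r_n$.
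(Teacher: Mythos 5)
Your overall strategy -- argue by contradiction, conjugate by $\e^{f_n}$ with a regularized convex weight, and run the positive-commutator/phase-space scheme of \cite{HS} -- is the same as the paper's, but two of your steps fail as written.

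First, the weight. You regularize the \emph{entire} radial part, $f_n=s\eta^m_0\cdot x+(1/m+\epsilon)\tilde r_n$, and assert that $[a_j,a_k^*]$ is non-negative "by convexity of $\tilde r_n$". But $\tilde r_n=r/(1+r/n)$ is not convex: its Hessian equals $(1+r/n)^{-2}(\delta_{ij}-\hat x_i\hat x_j)/r-2(r/n)(1+r/n)^{-3}\hat x_i\hat x_j/r$ with $\hat x=x/r$, and for $r\gtrsim n$ the negative radial block is of the \emph{same} order $1/r$ as the positive transverse block. Pairing with $\nabla Q$, positivity of the commutator symbol on the characteristic set would then require $|P_\perp(\hat x)\nabla Q|^2\gtrsim|\hat x\cdot\nabla Q|^2$ with comparable constants, which the contradiction hypothesis does not supply: it only bounds the transverse component below by some $\delta_1>0$ that may be far smaller than $|\nabla Q|^2\le\Delta_1$. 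The paper avoids this by keeping the term $r/m$ \emph{unregularized} (admissible precisely because one works in $L^2_m$, so $\e^{(\sigma^m_0-\epsilon)\omega^m_0\cdot x+r/m}\phi\in L^2$ a priori) and regularizing only the small addition $2\epsilon r$; the non-convex error is then $O(\epsilon/r)$ against the exact transverse convexity $m^{-1}(I-\hat x\hat x^T)/r$, and the required balance is exactly \eqref{convexity}, $\delta_1/m>2\epsilon\Delta_1$. This is the whole reason the proposition is formulated for the $m$-regularized quantities and why $\theta^m_0/m$ appears in \eqref{maineq:1}--\eqref{maineq:2}.

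Second, the directions away from $\theta^m_0$. You claim the unsolvability of the pair "precludes real $\xi$ solutions" of the energy equation near $\omega=\theta^m_0$, and for other $\omega$ "implies that $\nabla Q$ cannot be parallel to $\omega$". Neither holds: equation \eqref{maineq:1} alone always has a solution (this is Proposition \ref{just energy}, and the paper records it explicitly), and the contradiction hypothesis concerns only the fixed imaginary part $\eta^m_0+\theta^m_0/m$ and the fixed direction $\theta^m_0$ -- it says nothing about $\nabla Q$ at points $\xi+\i\bigl(s\eta^m_0+(1/m+\epsilon)\omega\bigr)$ for $\omega\ne\theta^m_0$ (indeed, for the actual eigenfunction the pair \emph{is} solvable at every other boundary normal). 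The mechanism you need, and which your proposal never invokes, is the hypothesis that $\eta^m_0$ is a \emph{regular} point: $\hat C^m_0=\{\theta^m_0\}$ gives $\mu(\hat x)=\sigma^m_s(\hat x)-\eta^m_0\cdot\hat x>0$ for all $\hat x\ne\theta^m_0$, hence by lower semicontinuity a strictly positive minimum on any closed cone avoiding $\R_+\theta^m_0$, and therefore a priori uniform bounds $\sup_n\|\chi_C\phi_n\|<\infty$ there (Step III of the paper). The commutator positivity is only needed, and only available, on a small cone about $\theta^m_0$; outside it the argument is soft convex geometry, not transversality.
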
 

\begin{proof} We drop the superscript $m$. So 
  fix $\omega_0 \in S^{d-1}$ with $\sigma_0 =
  \sigma_c(\omega_0) < \infty$.  
  
Let \begin{equation*}
\Delta_1 = \max \{|\nabla Q(\xi + \i  (\eta_0 +\theta_0/m))|^2  | \,\xi \in \mathbb{R}^d, Q(\xi + \i (\eta_0 +\theta_0/m)) = \lambda\}. 
\end{equation*} Note that indeed (\ref{maineq:1}) has a solution,
cf. Proposition \ref{just energy}.

Letting $P_{\perp}(\theta) u = u - (u\cdot \theta)\theta$ we introduce
\begin{equation*}
\delta_1 = \min \{|P_{\perp}(\theta_0)\nabla Q(\xi + \i (\eta_0 +\theta_0/m))|^2 | \,\xi \in \mathbb{R}^d,  Q(\xi + \i (\eta_0 +\theta_0/m)) = \lambda\}.
\end{equation*}
 We will  show that $\delta_1=0$ 
 proceeding   by the way of 
contradiction.  The contradiction if  $\delta_1>0$ will arise  by showing   
  that  $\e^{s \eta_0 \cdot x } \phi \in L_m^2(\mathbb{R}^d
 )$  for some $s>1$.  So suppose $\delta_1>0$.

{\bf Step I} (Construction of phases.)
Consider for (small) $\epsilon>0$
\begin{align*}
f(x)& = (\sigma_0-\epsilon)\omega_0 \cdot x +  r/m,\\
f_{ n}(x)& = f(x) + 2\epsilon  r/(1+r/n),\  n \in \mathbb{N},\\
F(x)& = f (x)+ 2\epsilon  r.
\end{align*}
 
Note that $\phi_n:=\e^{f_{ n} }\phi \in
 L^2(\mathbb{R}^d)$. 
 We will
show that 
$\|\phi_n\| \le K$ with a constant $K$ independent of $n$ provided
$\epsilon>0$ is chosen small enough. Taking $n\to \infty$ yields
$\e^{F } \phi \in L^2$ which is a contradiction since $F(x)\geq
(\sigma_0+\epsilon)\omega_0 \cdot x+  r/m$. A necessary smallness
condition on $\epsilon$ is
\begin{align}\label{convexity}
  \delta_1/m>2\epsilon\Delta_1.
\end{align}

   {\bf Step II} (Role of \eqref{convexity}, convexity.) Noting that $\partial_ i
r=x_i/r$ we can compute $\nabla f_n$ and then estimate
\begin{align*}
 |(\xi + \i (\eta_0+\tfrac{x/r}m) - (\xi + \i \nabla f_{n})| \leq 3\epsilon.  
\end{align*} If  $Q(\xi + \i
\nabla f_{ n}) \approx \lambda$ 
this will
  for small $\epsilon$ allow us to exploit the positivity
of $\delta_1$ in a phase-space argument. 
 More precisely we  claim that  there is an open
 cone $\tilde C_0 \supset
 C_{0}:=\R_+\theta_0=\{c\theta_0|\,c>0\}$ 
 so that the symbol
\begin{align}\label{eq:pois]}
  b_{n}(x,\xi)=\sum_{i,j} r\overline{\partial_i Q(\xi + \i\nabla f_{
      n})} (\partial_j \partial_i f_{ n})\partial_j Q(\xi + \i\nabla
  f_{ n})
\end{align}
has a positive lower bound for   $x \in \tilde C_0$ with $|x|\geq R$  and for $|Q(\xi + \i
\nabla f_{ n}) - \lambda|^2 \leq  2\kappa$  provided 
$R^{-1},\kappa,\epsilon >0$ are small enough. The bound
is uniform in    $n, x, \xi$.  This follows
from the computations
\begin{align*}
  \partial_j \partial_i (r/(1+r/n)) &= (1 +
r/n)^{-2}\partial_ j \partial_i r - 2 (r/n)(1+ r/n)^{-3} \partial_ j
r\partial_i r/r,\\
\partial_j \partial_i r&=(\delta_{ij}-x_jx_ir^{-2})/r.
\end{align*} Note that the non-convex part $- 4\epsilon (1/n)(1+ r/n)^{-3}
|x/r\rangle \langle x/r|$ of the Hessian $(\partial_j \partial_i
f_{ n})$ has the lower bound $-2\epsilon
 I/r$,  while the convex part    has the lower bound
$m^{-1} (I-|x/r\rangle \langle x/r|)/r$.  Whence for $x$ in a
small open  cone $\tilde C_0\supset C_{0}$ and $R^{-1},\kappa,\epsilon >0$
small    indeed we obtain a lower
bound of the above   form $ b_{n}(x,\xi)\geq c_1$ where the constant  $c_1$ can be
chosen as close to $ c_2:=-2\epsilon
 \Delta_1+ \delta_1/m$ as desired.
 The positivity of $ c_2$  is exactly
\eqref{convexity}. In our application we  may for convenience choose
$c_1=\tfrac{\delta_1}{2m}$ and  consider only,  say  $\epsilon\leq
\tfrac{\delta_1}{8m\Delta_1}$. This allows us  to   consider $\tilde C_0 $ as
being independent of  the parameters
   $R^{-1},\kappa,\epsilon >0$ provided they are small.  Fix such a  $\tilde C_0 $.

 {\bf Step III} (Bounding on the complement of $ C_0$.)
 
Note that $\mu(\hat x):= \sigma_s(\hat x) - \eta_0 \cdot
\hat x$ is lower semi-continuous on $S^{d-1}$.   Whence on  any closed   cone $C\subset \mathbb{R}^d
\setminus  C_0$  (for example $C=\mathbb{R}^d
\setminus \tilde C_{0
}$) there exists
\begin{align*}
  \mu_C:=\min \{\mu( x/|x|)|\,0\neq x\in C\}>0,
\end{align*} and for $3 \epsilon<\mu_C$   
another compactness argument shows that $\e^F\phi\in
L^2(C)$. We put this result in a more convenient form: For any smooth function $\chi_C$ on $\R^d$ taken homogeneous of degree zero for
$|x|\geq 1$,   $\chi_C(x) = 0$ in a neighbourhood
of $C_{0}$,  $\chi_C(x) = 0$ for $|x|\leq 1/2$, and with $\chi_C(x) =1$ for
$|x|\geq 1$ and $x$ outside another such neighbourhood, 
\begin{align}\label{eq:elB}
  \sup_n\|\chi_C\phi_n\| <\infty\text{ for small }\epsilon. 
\end{align} 

{\bf Step IV} (Implementation of a scheme from \cite{HS}.)  Consider
the symbol $b_n = r \{X,Y\}$ (the Poisson bracket)
  where $X = \text{Re} (Q(\xi + \i \nabla f_n(x)) - \lambda)$ and $ Y
  =\text{Im} Q(\xi + \i \nabla f_n(x))$. This is given by  \eqref{eq:pois]}.  We
  will freely use other notation from \cite{HS}, in
    particular the  localization symbols  $ \chi_\mp$
      and their   quantizations $\tilde \chi_\mp$  also used
    in   the proof of
  Proposition \ref{just energy} (now with a different $f_n$
 but again in terms of a small parameter $\kappa>0$). Pick any smooth function $\chi_C$ as in Step III with the  property
   that if $\chi_{C_0}^2(x):= 1-\chi_C^2(x)\neq 0$ then either  $|x|< 1$
   or $x\in\tilde C_0$.  Of course we are going to use  \eqref{eq:elB} as well as the lower
   bound of Step II. At
   this point we  can consider the parameters
   $R^{-1},\kappa,\epsilon >0$ of   Steps II and III as fixed (small),
   and with $c:=c_1/3$ we conclude that 
 $(b_n - 3c)\chi_{-}^2
  \chi_{C_{0}}^2 \ge 0$ for $|x|\geq R$.

Noting also  the uniform estimate $(b_n - 3c)\chi_{-}^2
  \chi_{C}^2 \geq -K_1\chi_{C}^2$ we can  then  mimic  \cite[Sections
  6 and 7]{HS}
  and obtain with $A_c:=\Opw (rY)$ and $\tilde{X} +
    \i \tilde{Y}:=Q(p+\i \nabla f_n)-\lambda $: 
\begin{align*}
& 2\text{Im}(A_c(\tilde{X} + \i \tilde{Y}) )\ge \\ 
& 2c +\tilde{Y}r\tilde{Y}  - K_1\chi^2_C- K_2 \tilde{\chi}_+ \langle p\rangle^{2q}\tilde{\chi}_ + +(c - K_3r^{-1/2}\langle p\rangle^{2q}r^{-1/2}) \\
& \ge 2c +\tilde{Y}r\tilde{Y} - K_1\chi^2_C - K_4 \tilde{\chi}_+\langle p\rangle^{2q}\tilde{\chi}_+ - K_5 \chi(r\le N)\langle p\rangle^{2q}\chi(r\le N). 
\end{align*}
In the first  step we used the bound $\chi\langle
  p\rangle^{2q}\chi\leq K r^{-1/2}\langle p\rangle^{2q}r^{-1/2}$ where
  $\chi=\chi\parb{r\leq \sqrt{1+R^2}}$, 
  and in the second  step we used a slightly modified version of \cite[Lemma
4.4]{HS} (applied  with $s=-1/2$, $t=0$ and $\delta=c/K_3$). Taking the expectation in the state $\phi_n$
and  using a slightly modified version of \cite[Lemma 4.3]{HS} we get 
\begin{align}\label{eq:commu}
  \begin{split}
   2c\|\phi_n\|^2  \le & -2\Im \inp{\phi_n,A_cV\phi_n}  - \|r^{1/2}\tilde{Y}\phi_n\|^2 + K_1'\|V\phi_n\|^2  + \\
&  K_1'\|r^{-1/2}\phi_n\|^2 + K_2'\|\langle p\rangle^q\phi\|^2 + K,
 \end{split}
\end{align}
where $K= \sup_n K_1\|\chi_C\phi_n\|^2$.

Taking into account (\ref{assump1}) and (\ref{assump2})  and  the fact
$A_cr^{-1}A_c\leq \tilde{Y}r\tilde{Y}+  K'r^{-1/2}\langle p\rangle^{2q}r^{-1/2}$
(and  by invoking
again \cite[Lemmas 4.3 and 4.4]{HS}) we estimate 
\begin{align*}
&-\i\inp{\phi_n, [V_1, A_c]\phi_n}\\
 &\le \delta\|\langle p\rangle^q \phi_n\|^2  + K_1\|\langle p\rangle^q \chi(r\le N)\phi_n\|^2 + K_2\|\langle p\rangle^q r^{-1/2}\phi_n\|^2 \\ 
&\le \delta' \|\phi_n\|^2 + K_3 \|\langle p\rangle^{q}\phi\|^2,
\end{align*}
 and  
\begin{align*}
&-2\Im \inp{A_c\phi_n,V_2\phi_n} \\
 &\le \|r^{1/2}V_2\phi_n\|^2 + \|r^{-1/2}A_c\phi_n\|^2 \\
&\le \delta\|\phi_n\|^2 + \|r^{-1/2}A_c\phi_n\|^2 + K_4\|\phi\|^2\\
&\le \delta'\|\phi_n\|^2 +\|r^{1/2}\tilde{Y}\phi_n\|^2+ {\color{red}K_5\|\langle p\rangle^{q}\phi\|^2.}
\end{align*}

 We insert these estimates
with $\delta'$ chosen smaller than $c/2$   
into \eqref{eq:commu}  and obtain finally the uniform bound
\begin{equation*}
c\|\phi_n\|^2 \leq \text{ constant},
\end{equation*} accomplishing the goal of Step I.
\end{proof}

\begin{proof} [Proof of Theorem \ref{rademacher1}] There exists a
  sequence $\eta^m=\sigma^m_c(\omega^m)\omega^m$ of regular points in
  $\partial \mathcal{E}^m$ with $\eta^m\to \eta_0=\sigma_0\omega_0$
  for $m\to \infty$, cf. the discussion before Proposition
  \ref{mainthm}. For all elements of this sequence this proposition
  applies and the equations \eqref{maineq:1} and
  \eqref{maineq:2} are satisfied.  Using the ellipticity of $Q$ and by
  going to a subsequence if necessary we can assume $(\xi^m, \theta^m,
  \beta^m, \eta^m) \rightarrow (\xi, \theta,\beta, \eta_0)$ which by
  the continuity of $Q$ and $\nabla Q$ provides a solution to the
  equations \eqref{maineq1} and \eqref{maineq2}. Since
  $\sigma^m_s(\theta^m) = \eta^m \cdot \theta^m$ we can by taking the
  limit show that $\sigma_s(\theta) = \eta_0 \cdot \theta$: For given
  $\epsilon, R>0$ we have for all large $m$ and all $\eta\in\vE^m$
  with $|\eta|\leq R$
  \begin{align*}
    \eta_0\cdot \theta+\epsilon\geq \eta^m\cdot \theta^m\geq
    \eta\cdot \theta^m\geq   \eta\cdot \theta-R \epsilon.
  \end{align*} Taking  $\epsilon\to 0$ yields 
\begin{align*}
    \eta_0\cdot \theta \geq   \eta\cdot \theta \text{ for all
    }\eta\in\vE^m\subset\vE\text{ with }|\eta|\leq R.
  \end{align*}  Then taking $m,R\to \infty$ using \eqref{eq:apprM} and the (related) fact that
  $\sigma^m_c(\omega)=\infty$ if $\sigma_c(\omega)=\infty$ we obtain that $\sigma_s(\theta)
  \leq \eta_0 \cdot \theta$. Obviously $
  \eta_0 \cdot \theta\leq \sigma_s(\theta)$, so $\sigma_s(\theta) =
  \eta_0 \cdot \theta$ is proven.

The second  result follows from the continuity of $1/\sigma_c(\omega)$,
see Theorem \ref{3sigmas}, and the fact that $S^{d-1}$ is connected.
\end{proof}

\section{An example,  $\sigma_{loc} \ne \sigma_s$}\label{sec:example}

In this section we consider for $\epsilon \in (0,1/2)$ the polynomial
\begin{equation*}
Q(\xi) = |\xi|^4+ 2\epsilon \xi_d + \epsilon^2 \xi_d^2
\end{equation*} in dimension $d\geq2$.
A crude estimate gives $Q(\xi) \ge -2\epsilon^{4/3}$.  We take
$\lambda = -1$ so that $\lambda < \inf \sigma(Q(p))$,  and we note 
\begin{align}\label{eq:fac}
  Q(\xi) - \lambda =( |\xi|^2 + \i( 1 +\epsilon \xi_d))( |\xi|^2 - \i( 1
  +\epsilon \xi_d)).
\end{align}

We first solve the system
\begin{align*}
&Q(\xi + \i \sigma \omega) = \lambda, \\
&\nabla Q(\xi + \i \sigma \omega) = \beta \theta
\end{align*}
for $\sigma > 0$ given $\omega \in S^{d-1}$.

The result is that for $\omega_d \neq 0$
\begin{equation} \label{possibilities}
\sigma  = \pm \epsilon \omega_d/2 + \sqrt{\lambda_0^2 - \epsilon^2(1-\omega_d^2)/4},
\end{equation}
where  $2\lambda_0^2 = (\epsilon/2)^2 + \sqrt{ 1 + (\epsilon/2)^4}$.
  
\begin{subequations}
 To see this let  $z=\xi + \i\sigma \omega$.  The system above becomes 
\begin{align*}
z^2 &=\pm \i (1+\epsilon z_d),\\
\beta \theta &= 2(1+\epsilon z_d)(\pm 2\i z + \epsilon e_d).
\end{align*}
A short computation shows that if $\omega_d \ne 0$ then $1+\epsilon z_d \ne 0$ so that redefining $\beta$ we have
\begin{align}
z^2 &=\pm \i(1+\epsilon z_d),  \label{eq1} \\
\beta \theta&= \pm 2\i z + \epsilon e_d. \label{eq2} 
\end{align} 
It is not hard to see that (\ref{eq2}) implies $\mp 2\sigma \omega +
\epsilon e_d = \gamma \xi$ for some $\gamma \in \mathbb{R}$.  Taking
the dot product with $\xi$ and using (\ref{eq1}) gives $\gamma |\xi|^2
= -1$ from which it follows that $\mp 2\sigma \omega + \epsilon e_d =
-\xi/|\xi|^2$.  Combining this equation with the real part of
(\ref{eq1}) gives $(4\sigma \mu + \epsilon^2) \sigma \mu = 1$ where
$\mu = \sigma \mp \epsilon \omega_d$.  A bit more computation yields
\eqref{possibilities}. 
  \end{subequations} 

In addition there is another set of solutions which are only valid for
$\omega_d = 0$. Namely for $d=2$, $\sigma = 1/\epsilon$ and for $d \ge
3$, any $\sigma \ge 1/\epsilon$ independent of $\omega \in S^{d-1}$
such that $\omega_d = 0$. 

Now suppose $(Q(p)+V+1)\phi=0$ for some $V\in C_c^\infty$ and for a
nonzero $\phi\in L^2$. Combining the computation
\eqref{possibilities} with our general results we then conclude that
$0 < \sigma_g < \infty$ and that $\sigma_c(\omega) < \infty$ for all
$\omega\in S^{d-1}$ (note that
near $\omega_d = 0$ the choices (\ref{possibilities}) stay well below
$1/\epsilon$ so  the choice $\sigma \ge 1/\epsilon$ is not relevant).
Thus $\sigma$ must be given by one of (\ref{possibilities}).

Thus there are the following possibilities for \emph{continuous} $\sigma(\omega)$:
\begin{enumerate}[1)]
\item\label{item:a} $\sigma(\omega) = \epsilon \omega_d/2 +
  \sqrt{\lambda_0^2 - \epsilon^2(1-\omega_d^2)/4}$
\item\label{item:b} $\sigma(\omega) = -\epsilon \omega_d/2 +
  \sqrt{\lambda_0^2 - \epsilon^2(1-\omega_d^2)/4}$
\item\label{item:c} $\sigma(\omega) = -\epsilon |\omega_d/2| +
  \sqrt{\lambda_0^2 - \epsilon^2(1-\omega_d^2)/4}$
\item\label{item:d} $\sigma(\omega) = \epsilon |\omega_d/2| +
  \sqrt{\lambda_0^2 - \epsilon^2(1-\omega_d^2)/4}$
\end{enumerate}

The case  \ref{item:d} cannot actually  be $\sigma_c(\omega)$ for the
eigenfunction $\phi$ because it does not describe the boundary of a
convex set. For \ref{item:a} and \ref{item:b} the set $\partial
\mathcal{E}$ is $C^1$ and Corollary \ref{C1} applies. For \ref{item:c}
we cannot apply Corollary \ref{C1} due to the wedge  at $\omega_d=0$
while indeed Theorem \ref{regularpoints} applies near  $\omega_d=\pm
1$ for example.  The sets $\partial\mathcal{E}$ are depicted for the
cases \ref{item:a} and {\color{blue}\ref{item:b}} for $d=2$ by polar plots (this is for
$2\lambda_0/\epsilon=6$ and in terms of the unit $\epsilon/2$):
\vspace{0.5cm}
\begin{center}
\includegraphics[width=9cm,totalheight=7cm]{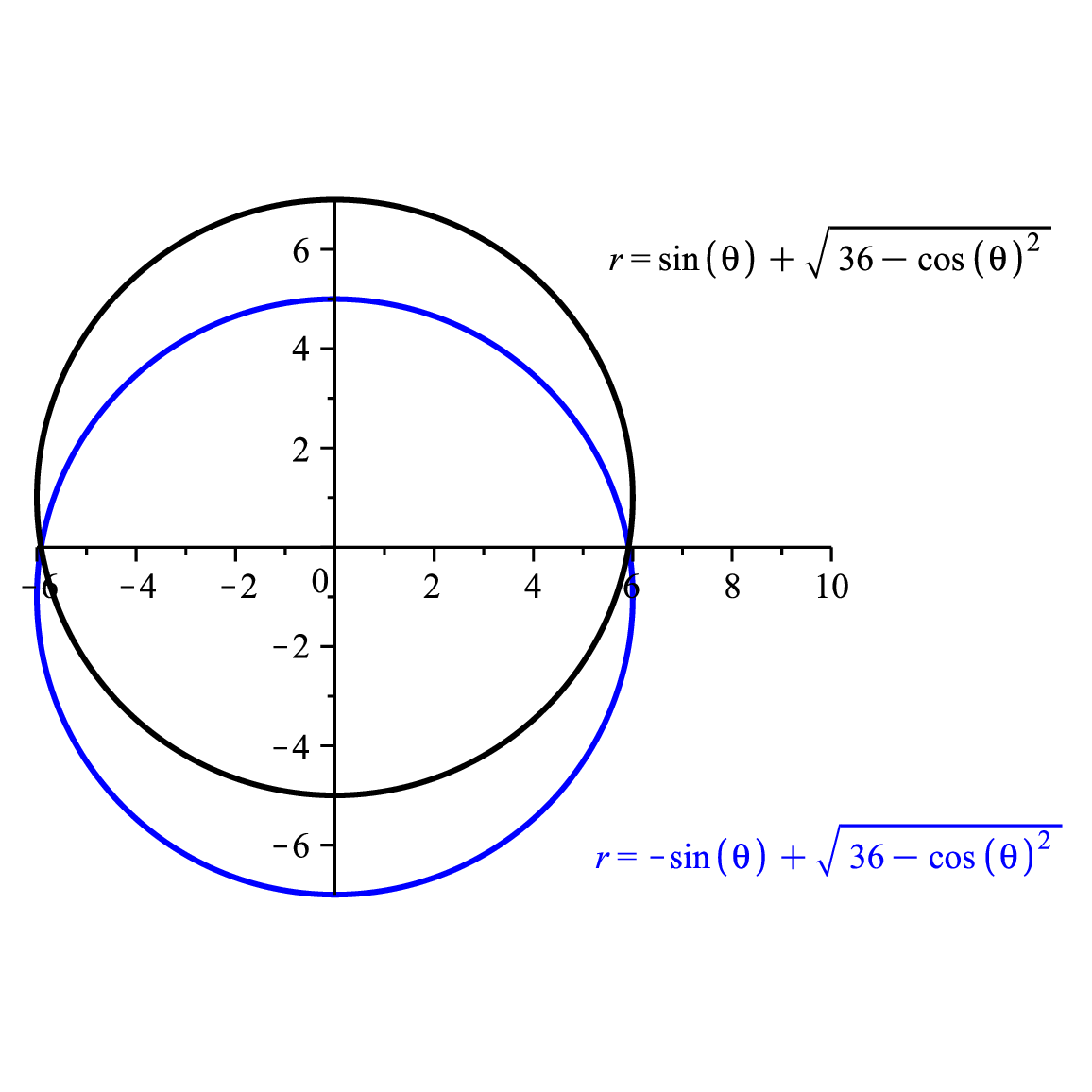}
\end{center}
Note that in this picture $\partial\mathcal{E}$  for the case
\ref{item:c} is the union of the (closed) upper blue arch and the
(closed) lower black
arch, whence there is a wedge at $\omega_d=0$ (in fact defined by 
$\sin \psi=\tfrac\epsilon{2\lambda_0}=\tfrac16$ where  $\pi-2\psi$ is the
apex angle).
In general   a computation for case \ref{item:c} shows that Theorem
\ref{regularpoints} applies  if and only if $ |\omega_d| >
\tfrac\epsilon{2\lambda_0}$ and in this case
\begin{align*}
  \sigma_{loc}(\omega)=\sigma_s(\omega) =\lambda_{0}-\tfrac\epsilon{2}|\omega_d|.
\end{align*}
If on the other hand $ |\omega_d| \leq 
\tfrac\epsilon{2\lambda_0}$ we compute for case \ref{item:c}
\begin{align}\label{eq:sigSy}
  \sigma_s(\omega) =(1 - \omega_d^2)^{1/2} \sqrt{\lambda_0^2 -
    \epsilon^2/ 4}.
\end{align} We present below an example of case \ref{item:c} where
$\sigma_{loc}(\omega)> \sigma_s(\omega)$ for $ |\omega_d| <
\tfrac\epsilon{2\lambda_0}$. 

Let us first note that there are examples of \ref{item:a} and
\ref{item:b}: Indeed (motivated by \eqref{eq:fac}) we  take
\begin{align*}
  \phi_{\pm}=\chi+(1-\chi)( p^2 \pm \i( 1 +\epsilon p_d))^{-1}\delta,
\end{align*} where $\chi\in C_c^\infty$, $0\leq \chi\leq1$, $\chi$ is $1$ in a small neighbourhood
of $0$ and 
has small support,  and $\delta$ is the delta function at $0$. If
$g_{\pm} (x)$ denotes the Green's function $(p^2\pm
\i+\epsilon^2/4)^{-1}(x,0)$ then 
\begin{align}
  \label{eq:ei}
  \phi_{\pm}(x)=\chi(x)+\parb{1-\chi(x)}\e^{\pm x_d\epsilon/2}g_{\pm} (x).
\end{align} Using properties of $g_{\pm} (x)$ (see the discussion in
\cite[Subsection 1.2]{HS} and note that $\sqrt{\mp
  \i-\epsilon^2/4}=\i\lambda_0\mp (2\lambda_0)^{-1}$) we deduce  that  each choice $\phi= \phi_\pm$ fulfills
$(Q(p)+V+1)\phi=0$ for some $V\in C_c^\infty$. The choice $\phi_-$ is
an example of the case \ref{item:a} while the  choice $\phi_+$ is
an example of the case \ref{item:b}. In general for these cases we
have  that for all $\omega\in S^{d-1}$
\begin{align*}
  \sigma_{loc}(\omega)=\sigma_s(\omega) =\lambda_{0}\mp\tfrac\epsilon{2}\omega_d,
\end{align*} respectively.

Now for an example of case \ref{item:c}, we consider
\begin{equation*}
g(x) = (2\pi)^{-d}\int \e^{\i x\cdot
  \xi}(1+\epsilon \xi_d)(Q(\xi) +1)^{-1}  d\xi;\, x\neq 0.
\end{equation*} It is well-defined and smooth, and  introducing as above
\begin{align}
  \label{eq:ei0}
  \phi(x)=\chi(x)+\parb{1-\chi(x)}g (x),
\end{align}  for this $\phi$ we have 
$\sigma_g\in(0,\infty)$, cf. Paley-Wiener theory.
We claim that $(Q(p)+V+1)\phi=0$ for some $V\in
C_c^\infty$ and that  this is an example of case \ref{item:c}. Since
$g(x)=\overline{g(-x)} $ we have $\sigma_c(\omega) =
\sigma_c(-\omega)$ is  valid for all $\omega$. This excludes   the cases
\ref{item:a} and \ref{item:b} and we are left with case
\ref{item:c}. Whence it remains to construct $V$. We use the function $g_{-} (x)=(p^2-
z)^{-1}(x,0)$, $z=\i-\epsilon^2/4$, from above and represent
\begin{align*}
  2\i g(x)=\e^{- x_d\epsilon/2}g_{-} (x)-\e^{ x_d\epsilon/2}\overline {g_{-}
  (x)}\mand \Re g(x)=\cosh (x_d\epsilon/2)\Im g_{-} (x).
\end{align*} Next we use that $\Im \parb{(p^2-
z)^{-1}(x,0)}>0$ for all $|x|>0$ small enough (this is valid for any
$d\geq 2$
and for any
$z\in\C$ with $\Im z>0$).  For example in dimension  $d=3$  explicitly for
$z=\i-\epsilon^2/4$ this property holds for 
$|x|< \pi(2\lambda_0)^{-1}$. Whence by possibly
adjusting the support of $\chi$ we can safely define
\begin{align}\label{eq:pot}
  V=-\{(Q(p) +1)\phi\}/\phi\in C_c^\infty.
\end{align}

Finally from the asymptotics of $g_{-}$ we obtain
$\sigma_{loc}(\omega)=\lambda_{0}-\tfrac\epsilon{2}|\omega_d|$. 
Comparing with \eqref{eq:sigSy}  we see that for the
eigenfunction \eqref{eq:ei0}  indeed 
$\sigma_{loc}(\omega)> \sigma_s(\omega)$ for $ |\omega_d| <
\tfrac\epsilon{2\lambda_0}$.

\begin{remarks*} It is easy to check that the potential $V$ of
  \eqref{eq:pot} satisfies $\overline{RV}=V$ where $Rf(x_{\perp},
  x_d)=f(x_{\perp}, -x_d)$ using the fact that also $Q(p)$ has conjugate
  reflection symmetry. However there is no reason to believe that $V$
  is real-valued. If  on the other hand we pick an arbitrary real
  nonzero $V\in C_c^\infty$, $V\geq 0$, the variational principle
  shows that for some $\kappa<0$ the energy $\lambda=-1$ is an
  eigenvalue of $H=Q(p)+\kappa V$. If furthermore $RV=V$ then we can pick a corresponding
  eigenfunction $\phi$ obeying $\overline{R\phi}=\phi$. This $\phi$ is
  an example of case \ref{item:c} with  a real potential in $
  C_c^\infty$. However it appears  difficult to compute asymptotics
  for $ |\omega_d| \leq 
\tfrac\epsilon{2\lambda_0}$. We
  claim that for $d=3$ at least  $\sigma_{loc}(\omega)>
  \sigma_s(\omega)$ when $\omega_d=0$. This can be done by first
  representing the Green's function without potential as
  \begin{align*}
    \parb{Q(p)+1}^{-1}(x,0)=e^{- x_d\epsilon/2}\int g_{-} (x-y)
    e^{y_d\epsilon}g_{+} (y) dy,
  \end{align*} where $g_{\pm}$ are given as above. For $d=3$ we may
  use the familiar expression $(p^2- z)^{-1}(x,0)=(4\pi)^{-1}\e^{\i\sqrt
    z|x|}/|x|$, $\Im \sqrt z>0$, and estimate this integral explicitly
  (after a suitable deformation of contour) and  show that indeed $\sigma_{loc}(\omega)> \sqrt{\lambda_0^2 -
    \epsilon^2/ 4}$ when
  $\omega_d=0$. We skip the details.
\end{remarks*}

\section{The Agmon metric and a variational principle}\label{sec:Agmon}

Here we discuss some connection to previous works \cite{Ag1, Ag2}
which applies for example to the case \ref{item:c}  of Section
\ref{sec:example}.  As we will see we are not going to derive better
bounds than we already have.    Our analysis applies to an eigenvalue $\lambda$  not in
$\text{Ran}Q$ and results in a set $\mathcal{E}_A$ whose
boundary is described by the same equations as the boundary of the
sets $\mathcal{E}$ which we have seen above.  In other words the set
$\partial\mathcal{E}_A$ is just a subset of the solutions of the
equations (\ref{maineq1}) and (\ref{maineq2}).  For the  case
\ref{item:c} of Section \ref{sec:example} the boundary 
$\partial\mathcal{E}=\partial\mathcal{E}_A$,  however this is 
not valid for the cases \ref{item:a} and \ref{item:b}.  As an
additional bonus we will see that quite generally all the solutions to (\ref{maineq1}) and (\ref{maineq2}) can
 be obtained from the same  variational principle which we use to
 derive the equations satisfied by the points of
 $\partial\mathcal{E}_A$.  In the following we assume $\phi$ is an
 eigenfunction of $H$,  $(H-\lambda)\phi =0$.  We assume $V =o(1)$ at infinity and $\lambda \notin \text{Ran}Q$.

In analogy with what Agmon does \cite{Ag2} for the Laplacian, we consider the set of all real-valued $f \in C^1(\mathbb{R}^d)$ such that 
\begin{equation}\label{Agmonest}
\|(Q(p + \i \nabla f(x)) + V(x) - \lambda)\psi\| \ge \delta \|\psi\|
\end{equation}
with $\psi \in C_c^{\infty}(\mathbb{R}^d\setminus{\bar B_R}), B_R = B_R(0)$ for some $R$ and positive $\delta$.  Since our $V$ is $o(1)$ at infinity, it can be omitted from (\ref{Agmonest}) and we get an equivalent estimate.  We mention that the quadratic form estimate of Agmon in the case of a second order operator implies (\ref{Agmonest}).

Let   

\begin{equation}\label{Agmonest2}
\delta(f) = \lim_{R \to \infty} \inf \{\|(Q(p + \i \nabla f(x)) - \lambda)\psi\| \ | \  \psi \in C_c^{\infty}(\mathbb{R}^d\setminus{\bar B_R)}, \|\psi\| = 1\}.
\end{equation}
Note that $\delta(f)$ is invariant under translations:  $\delta(f) = \delta(f_a)$, $f_a(x) = f(x-a)$.  Thus $\delta(f)$ depends on the values of $\nabla f(x)$ for large $x$ rather than for what $x$ they are taken on.  From the viewpoint of using psdo's to get an estimate such as (\ref{Agmonest2}) with positive $\delta(f)$ it is natural to look at $f$'s which are symbols of order $1$ for which $ \nabla f(x)$   is in the set
\begin{align*}
  \mathcal{E}_A : = \{\eta \in \mathbb{R}^d | Q(\xi + \i t\eta) - \lambda \ne 0 \ \forall \xi \in \mathbb{R}^d,\ \forall t \in [0,1]\}
\end{align*}
   for all large $x$.  We do not show that this set of $f$ satisfy an estimate such as (\ref{Agmonest2}) (although this can be done) but rather come at the question from a different point of view.  We do mention an important reason for assuming that all smaller values of $\nabla f(x)$ in the same direction be in the set (the reason for $t$ in the definition of $\mathcal{E}_A$).  This is automatic with Agmon's quadratic form estimate but more importantly when trying to prove an estimate such as $\e^f\phi \in L^2$ one needs to approximate $f$ with smaller functions $f_{\epsilon}$ for which one knows apriori that $\e^{f_{\epsilon}}\phi \in L^2$.

Let $k$ be the Minkowski functional, $k(\eta) = \inf\{t>0| \eta/t \in
\mathcal{E}_A\}$, of the bounded convex 
open set $\mathcal{E}_A$.  (The convexity follows from \cite{Ho}.)
It follows that $\mathcal{E}_A = \{\eta | k(\eta) <
1\}$. Following Agmon \cite{Ag1} we introduce the polar $k_*(x) =
\sup\{x\cdot \eta/k(\eta)| \eta \ne 0\} = \sup\{x\cdot \eta| \eta \in
\mathcal{E}_A\}$.  $k_*$ is just the support  function of the
bounded convex set ${\bar{\mathcal{E}}_A}$.  Finally the Agmon metric based on $\mathcal{E}_A$ is 
\begin{align*}
  \rho_A(x,y) = \inf \{\int_0^1 k_*(\dot{\gamma}(t)) dt \ | \gamma(0) = y, \gamma(1) = x, \gamma(\cdot) \ \text{absolutely continuous}\}.
\end{align*}
Note that from Theorem \ref{thm:start} and Proposition \ref{just
  energy} it follows that each $\eta \in \mathcal{E}_A$ satisfies $
\e^{\eta\cdot x} \phi \in L^2$ and thus $\e^{tk_*(x)} \phi \in L^2$  for all $t \in [0,1)$.  (This can also be shown using the Combes-Thomas method \cite{CT}.)  We claim that
actually $\rho_A (x,0) = k_*(x)$.  First note that $\rho_A(x,0) \le
\int_0^1k_*(\dot{\gamma}(t))dt$ where $\gamma(t) = tx$.  This gives
$\rho_A(x,0) \le k_*(x)$. The opposite estimate, $\rho_A(x,0)
  \geq k_*(x)$, follows readily from the fact that $k_*$ is  a norm. We
give in the following a more informative proof, although it is  more
complicated. Let  $x$ be  a point of differentiability of
$k_*(x)$ (since $k_*$ is convex it is differentiable a.e.).  Pick  a point $\eta
\in \partial\mathcal{E}_A$ with $k_*(x) = \eta \cdot x$.  Then
$k_*(x+s\omega) \geq \eta \cdot (x + s \omega) = k_*(x) + s \eta \cdot
\omega$ and thus taking $s > 0 $ and then $s \to 0$ we obtain $\nabla
k_*(x) \cdot \omega  \geq \eta\cdot \omega$.  Since $\omega \in
\mathbb{R}^d$ is arbitrary we obtain  $\nabla k_*(x) =\eta$.  By
definition of $k$ this implies $k(\nabla k_*(x)) \le 1$ which 
by   \cite[Lemma 1.3]{Ag2}  implies  
 $k_*(x) \le \rho_A(x,0)$.  Thus $k_*(x) = \rho_A(x,0)$. Since we already knew that  $\e^{tk_*(x)} \phi \in L^2$ for all $t \in [0,1)$ 
the Agmon bound, $\e^{t\rho_A(x,0)}\phi \in L^2$ for $t \in [0,1)$, gives no new information.

\textbf{The variational principle:
}
We now turn to finding equations describing the set $\partial
\mathcal{E}_A$.  We fix $\omega_0 \in S^{d-1}$ and attempt to find the
minimum value, say $\sigma_0$, of $\sigma>0$ such that $Q(\xi + \i \sigma \omega_0)
=\lambda$ for some $\xi$.  We are of course still in the situation where $\lambda \not\in
\text{Ran}Q$. The point $\eta_0:=\sigma_0 \omega_0$ will then be in $\partial
\mathcal{E}_A$.  We want to use Lagrange multipliers.  For this
purpose define the two functions $f_1(\xi, t) = \text{Re}Q(\xi +
\i t\omega_0)$ and $f_2(\xi, t) = \text{Im}Q(\xi + \i t\omega_0)$.  If these functions are independent at a minimum point $(\xi_0 ,\sigma_0)$ in the sense that the two gradients $\nabla_{(\xi, t)}f_j(\xi_0, \sigma_0)$ are linearly independent then defining the function $F(\xi, t) = t + \alpha_1 f_1(\xi, t) + \alpha_2 f_2(\xi, t)$ and setting the derivatives equal to zero gives
\begin{subequations}
\begin{align}\label{Lagrange1}
1 - \alpha_1 \omega_0 \cdot \text{Im}\nabla Q(\xi_0 +\i\eta_0) + \alpha_2 \omega_0 \cdot \text{Re}\nabla Q(\xi_0 +\i \eta_0) = 0, &\\
\alpha_1 \text{Re}\nabla Q(\xi_0 +\i \eta_0) + \alpha_2 \text{Im}\nabla Q(\xi_0 +\i\eta_0) =0. \label{Lagrange2}
\end{align}
  \end{subequations}
Evidently $\alpha_1^2 + \alpha_2^2 > 0$ so that the real and imaginary parts of $\nabla Q(\xi_0 +\i \eta_0)$ are linearly dependent which means that for some $\beta \in \mathbb{C}$ and $\theta \in S^{d-1}$ we have 
\begin{subequations}
\begin{align} \label{usualeqs1}
Q(\xi_0 + \i \eta_0) &= \lambda, \\
\nabla Q(\xi_0 +\i \eta_0)& = \beta \theta. \label{usualeqs2}
\end{align}
\end{subequations}

On the other hand if the gradients of $f_1$ and $f_2$ are linearly dependent  at the point $(\xi_0, \sigma_0)$ then again the real and imaginary parts of $\nabla Q(\xi_0 +\i \eta_0)$ are linearly dependent at this point and the equations (\ref{usualeqs1}) and (\ref{usualeqs2}) hold. 

Conversely consider a point $\eta_0 = \sigma_c(\omega_0) \omega_0 \in \partial\mathcal{E}$ where we have in mind an eigenfunction $\phi$ of $H = Q(p) + V(x)$  with $\mathcal{E} = \{\eta \in \mathbb{R}^d | \e^{\eta \cdot x}\phi \in L^2\}$.  Suppose the eigenvalue $\lambda$ is not a critical value of $Q(z)$ but we no longer assume that  $\lambda \notin \text{Ran}Q$.  
There are $\xi_0 \in \mathbb{R}^d, \beta \in \mathbb{C}\setminus
\{0\}$ and $\theta \in S^{d-1}$ such that $\sigma_s(\theta) =
\eta_0\cdot \theta$ and such that these quantities along with
$\sigma_0 = \sigma_c(\omega_0)$ satisfy equations (\ref{usualeqs1})
and (\ref{usualeqs2}).  We claim the the gradients of the functions
$f_j$ are linearly independent at this point.   If for real $\alpha_1$
and $\alpha_2$ not both zero we have $\alpha_1 \nabla_{(\xi,t)}
f_1(\xi_0, \sigma_0) + \alpha_2 \nabla_{(\xi,t)} f_2(\xi_0, \sigma_0)
= 0$, we calculate $\alpha_1 \nabla_{\xi} f_1 + \alpha_2 \nabla_{\xi}
f_2 = 0$  and $\omega_0 \cdot (-\alpha_1 \nabla_{\xi} f_2 + \alpha_2
\nabla_{\xi} f_1) =0$.  These two equations imply $\nabla_{\xi}
f_1\cdot \omega_0 = \nabla_{\xi} f_2\cdot \omega_0 =0$ or $\nabla
Q(\xi_0 + \i \eta_0) \cdot \omega_0 = 0$.  Thus since
$\lambda$ is not a critical value of $Q(z)$, $\omega_0 \cdot \theta
=0$.  This contradicts the geometry of $\partial \mathcal{E}$:  Since
$\bar {\mathcal{E}}\subset \{\eta | (\eta - \eta_0)\cdot \theta \le
0\}$ and since for example $\eta_0/2$ is an interior
point of $\bar {\mathcal{E}}$  we can take $\eta =
\eta_0/2 + u$ above where $u$ is small and learn that
$u\cdot \theta \le 0$ for all small $u$, a contradiction.  Thus for
some small $\epsilon >0$, $\{(\xi,t) | Q(\xi + \i t \omega_0) = \lambda, |\xi - \xi_0|+ |t-\sigma_0| < \epsilon\}$ is a co-dimension two smooth submanifold of $\mathbb{R}^{d+1}$ and $(\xi_0, \sigma_0)$ is a critical point of the function $F(\xi, t) = t$ restricted to this submanifold.  This is because given the equations (\ref{usualeqs1}) and (\ref{usualeqs2}) we can find $\alpha_1$ and $\alpha_2$ solving the equations (\ref{Lagrange1}) and (\ref{Lagrange2}).  Thus the equations of Theorem \ref{rademacher1} coming from an eigenfunction of $H$ can be derived from this variational principle.

\subsection{The set $\bar {\mathcal{E}}$ for the Green's function}\label{Green's function}

Suppose $\lambda \in \mathbb{R} \setminus \text{Ran} Q$. Then the exponential decay of the Green's function
$$G(x-y) = (Q(p) - \lambda)^{-1}(x,y)$$
is naturally associated with the set 
$$\mathcal{E}_G =\{\eta \in \mathbb{R}^d | \e^{\eta\cdot x}G \in L^2(\mathbb{R}^d \setminus  B_1(0) )\}$$
whose boundary points, $\eta_0$, are associated  to
  solutions of the equations (\ref{usualeqs1}) and (\ref{usualeqs2}).   Clearly $\bar{\mathcal{E}}_A \subset \bar{\mathcal{E}}_G$.  In fact there is equality as is easy to show:  Suppose $\eta_0 = t_0 \omega_0 \in \bar{\mathcal{E}}_G \setminus\bar{\mathcal{E}}_A$, where $0 < t_0$ and  $\omega_0 \in S^{d-1}$.  We write
 \begin{equation} \label{eqforres}
 (Q(\xi) - \lambda)^{-1} = (2\pi)^{-d}\int _{\mathbb{R}^d \setminus B_1(0)} e^{\i x \cdot \xi} G(x) dx + F(\xi)
\end{equation}
where $F$ is an entire function.  Since $\sigma_g > 0$ we can use the convexity of  $\bar{\mathcal{E}}_G$ and the continuity of $1/\sigma_c(\omega)$ to show that given $\epsilon >0$ there is a $\delta>0$ so that for all $\eta$ in a neighbourhood of $\{t\omega_0 | 0 < t < t_0 - \epsilon\} $
$$e^{\eta \cdot x + \delta |x|} G \in L^2(\mathbb{R}^d \setminus  B_1(0) ).$$
Using the Cauchy-Schwarz inequality in (\ref{eqforres}) this implies that $(Q(\xi) - \lambda)^{-1} $ has an analytic continuation to a neighbourhood of 
$$\{z = \xi -\i t \omega_0 | \xi \in \mathbb{R}^d, 0 < t < t_0 - \epsilon\}.$$
But for small $\epsilon$ the zero set of $ Q(z) - \lambda$ will intersect this neigh{bour}hood, a contradiction.  

To end this subsection perhaps it is worth emphasizing a fact that we used above:  The set $\bar{\mathcal{E}}$ for any eigenfunction contains $\bar{\mathcal{E}}_A = \bar{\mathcal{E}}_G$ at the same spectral parameter.

\section{The set $\bar{\mathcal{E}}$ - smoothness}\label{sec:set-barmathcale}

The set of $\eta$ satisfying
\begin{subequations}
\begin{align}
    Q(\xi+\i \eta)&=\lambda, \label{main1}\\ 
    \nabla Q(\xi +\i \eta)&= \beta \theta. \label{main2}
  \end{align}
\end{subequations}
for some $\xi, \beta, \theta$ is a semi-algebraic set (see
\cite{BCR}).  By definition this means that it is a finite union of
sets of the form $S_n = \{\eta \in \mathbb{R}^d | q_j(\eta) = 0,
p_j(\eta) > 0, j = 1, \cdots, n\}$ where 
  the $p_j$ and $q_j$ are real polynomials. This comes from the fundamental result that a projection of a semi-algebraic set is a semi-algebraic set.  It would be interesting to know what restrictions this puts on the set of singular points of the boundary of the set $\mathcal{E}$ defined for an eigenfunction of $H$ with $0 < \sigma_g < \infty$.

We give sufficient conditions for the local smoothness of solutions, $z = \xi + \i\eta = h(\theta)$,  of (\ref{main1}) and  (\ref{main2}).  We do not assume that solutions come from the exponential decay of an eigenfunction of $Q(p) + V(x)$.  Let us assume $\lambda$ is not a critical value of $Q$ so that given a solution $(\xi_0, \eta_0, \beta_0, \theta_0)$, $\beta_0$ must be nonzero.  Let us assume $Q''(z_0)$ is invertible ($z_0 = \xi_0 + \i \eta_0$).  Generically this is true when $Q(z_0) = \lambda$ except on a $d-2$ dimensional  manifold.  Then we can define locally the Legendre transformation $P(w) = z\cdot w - Q(z),  w = \nabla Q(z)$.  $\nabla P$ is the inverse of $\nabla Q$.  We then have $\nabla P(\beta \theta) = z = \xi + \i \eta$ so that $Q(\nabla P(\beta \theta)) = \lambda$.  We can solve for $\beta$ in terms of $\theta$ locally if $\frac{\partial}{\partial \beta} Q(\nabla P(\beta \theta)) \ne 0$ when $\beta = \beta_0, \theta = \theta_0$.  A short calculation gives the requirement $\beta_0 \theta_0\cdot P''(\beta_0\theta_0) \theta_0 \ne 0$ so that we have:
 \begin{proposition}
Suppose $Q(z_0) = \lambda$, $ \nabla Q(z_0) = \beta_0 \theta_0$, $\beta_0\neq 0$,
$Q''(z_0)$ is invertible and 
\begin{equation*}
\nabla Q(z_0) \cdot Q''(z_0)^{-1}\nabla Q(z_0) \ne 0.
\end{equation*} 
Then there exists a neighbourhood of $(\theta_0 , \beta_0 , z_0)$ in
which the set of solutions to the system \eqref{main1} and
\eqref{main2} (with $z=\xi+\i\eta$) is parametrized smoothly by $\theta$.
\end{proposition}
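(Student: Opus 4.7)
The plan is to carry out the computation sketched in the paragraph preceding the proposition and invoke the implicit function theorem. Since $Q$ is a polynomial and $Q''(z_0)$ is invertible, the holomorphic map $\nabla Q$ is a local biholomorphism from a neighborhood of $z_0$ in $\mathbb{C}^d$ onto a neighborhood of $w_0:=\beta_0\theta_0$. I denote its inverse by $\nabla P$, where $P$ is the (local) Legendre transform of $Q$. Differentiating the identity $\nabla P(\nabla Q(z)) = z$ at $z_0$ yields the duality relation
\[
P''(w_0) = Q''(z_0)^{-1}.
\]

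In this neighborhood, equation \eqref{main2} is equivalent to $z = \nabla P(\beta\theta)$, so substitution into \eqref{main1} reduces the full system to the scalar equation
\[
F(\beta,\theta) := Q(\nabla P(\beta\theta)) - \lambda = 0,
\]
with $\beta\in\mathbb{C}$ and $\theta\in\mathbb{R}^d$ ranging near $\theta_0$. Using the chain rule together with $\nabla Q \circ \nabla P = \mathrm{id}$, I obtain
\[
\partial_\beta F(\beta,\theta) = \nabla Q(\nabla P(\beta\theta))\cdot P''(\beta\theta)\theta = (\beta\theta)\cdot P''(\beta\theta)\theta.
\]
Evaluating at $(\beta_0,\theta_0)$ and using $\beta_0\theta_0 = \nabla Q(z_0)$ along with the duality relation above gives
\[
\partial_\beta F(\beta_0,\theta_0) = \beta_0^{-1}\,\nabla Q(z_0)\cdot Q''(z_0)^{-1}\nabla Q(z_0),
\]
which is nonzero precisely by the hypothesis of the proposition (recall $\beta_0\ne 0$).

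To finish, I would apply the implicit function theorem to $F$, viewed as a smooth (indeed holomorphic in $\beta$) map from $\mathbb{C}\times\mathbb{R}^d$ to $\mathbb{C}$. The Cauchy--Riemann equations for $F$ in $\beta$ imply that the real Jacobian of $(\Re F,\Im F)$ with respect to $(\Re\beta,\Im\beta)$ has determinant $|\partial_\beta F|^2\ne 0$ at $(\beta_0,\theta_0)$, so there is a smooth $\beta=\beta(\theta)$ defined near $\theta_0$ with $F(\beta(\theta),\theta)\equiv 0$, and correspondingly $z(\theta) = \nabla P(\beta(\theta)\theta)$ smoothly parametrizes the local solution set of \eqref{main1}--\eqref{main2}. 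There is no real obstacle; the only substantive step is translating the hypothesis $\nabla Q(z_0)\cdot Q''(z_0)^{-1}\nabla Q(z_0)\ne 0$ into nonvanishing of $\partial_\beta F$ at $(\beta_0,\theta_0)$, which the chain-rule computation above accomplishes.
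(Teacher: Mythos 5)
Your proposal is correct and follows essentially the same route as the paper: the paragraph preceding the proposition introduces the local Legendre transform $P$ with $\nabla P = (\nabla Q)^{-1}$, reduces the system to $Q(\nabla P(\beta\theta)) = \lambda$, and identifies the nondegeneracy condition $\beta_0\theta_0\cdot P''(\beta_0\theta_0)\theta_0 \ne 0$, which is exactly your $\partial_\beta F(\beta_0,\theta_0)\ne 0$ rewritten via $P''(w_0)=Q''(z_0)^{-1}$. Your write-up merely makes explicit the "short calculation" and the real two-by-two Jacobian argument for the complex unknown $\beta$, both of which the paper leaves implicit.
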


Given the assumptions of the proposition, we have $z = h(\theta)$ in a
neighbourhood of $(z_0, \theta_0)$.  We can calculate the derivative
$h'(\theta)$ by differentiating $ \nabla Q(z) = \beta \theta$ and using the formula for $\beta'(\theta)$ from the above
application of the implicit function theorem.   We obtain as an
identity on the tangent space 
$T_\theta( S^{d-1})=\{x\in\R^{d}|\, x\cdot \theta=0\}$ 
\begin{align*}
  h'(\theta) = \beta Q''(z)^{-1}(I - R(\theta));\,R(\theta)x := \frac{\theta\cdot Q''(z)^{-1}x}{\theta\cdot
  Q''(z)^{-1}\theta}\theta.
\end{align*}

To better understand the meaning of the relationship between $\eta$
and $\theta$ let us take $\theta_1$ near $\theta_0$ and look for a
critical point of the function $\eta \cdot \theta_1$ for $\eta =
\text{Im}h(\theta)  = : g(\theta)$.  Since
  $\theta\cdot h'(\theta)=0$ by the above formula obviously $\theta=\theta_1$ is
a  critical point of the function $g \cdot \theta_1$.  This is consistent with the geometric interpretation of $(\theta_1,g(\theta_1))$ being the parameters of a supporting hyperplane at the boundary point $\eta_1 = g(\theta_1)$ of the convex set $\mathcal{E}$ which comes from an $L^2$-function $\phi$ solving $(H-\lambda)\phi = 0$.  In this case $\eta \cdot \theta_1$ would be maximized with $\eta = \eta_1$.  If $\eta = g(\theta)$ describes the boundary of a convex set $\mathcal{E}$ which comes from an $L^2$-function $\phi$ solving $(H-\lambda)\phi = 0$ then the uniqueness of $\eta$ corresponds to the strict convexity of $ \bar{\mathcal{E}}$. 

The conditions which allow us to conclude that $\eta$ is a smooth
function of its direction $\omega = \eta/|\eta|$ are more
complicated.  If $g(\theta) = \text{Im} h(\theta)$ as above, we want to solve
for ($\sigma,\theta)$ as a function of $\omega$ in the equation
$\sigma \omega - g(\theta) = 0$ near $ (\sigma\omega, \theta) =
(\eta_0, \theta_0)$.  The inverse function theorem gives the result
that $\sigma$ is locally a smooth function of $\omega$ if the only
solution $(x, \mu)$ 
to the real linear equation $g'(\theta_0)x = \mu \eta_0$ is the trivial solution.
 Let us make the assumption that $g(\theta_0)\cdot
 \theta_0\neq 0$ and  the (generic) assumption $\ker g'(\theta_0) = {0}$. Note that if the equation $\eta = g(\theta)$  represents the boundary of
a set $\mathcal{E}$ which comes from a solution $\phi$ to
$(H- \lambda)\phi = 0$ with  $\sigma_g>0$ and  $\sigma_s(\theta) =
g(\theta)\cdot \theta$, cf. Theorem \ref{rademacher1},  then
obviously $g(\theta)\cdot \theta\neq 0$.  Now
differentiating $Q(z) = \lambda$ gives $\theta_0 \cdot h'(\theta_0) =
0$ and therefore also that $0=\theta_0 \cdot g'(\theta_0) x=\mu \theta_0 \cdot \eta_0= \mu g(\theta_0)\cdot
 \theta_0$
 showing that $\mu = 0$ and then in turn  $x=0$. Whence  the only 
 solution to $g'(\theta_0)x = \mu \eta_0$ is the trivial one.
 
 \textbf{Acknowledgement:}  E. S. was supported by Grant 11-106598 FNU.

\end{document}